\def\BibTeX{{\rm B\kern-.05em{\sc i\kern-.025em b}\kern-.08em
    T\kern-.1667em\lower.7ex\hbox{E}\kern-.125emX}}
\definecolor{keywordcolor}{RGB}{10,100,227}
\newcolumntype{L}[1]{>{\raggedright\let\newline\\\arraybackslash\hspace{0pt}}p{#1}}
\definecolor{lightblue}{RGB}{173, 216, 230}
\newcommand{\hlblue}[1]{\colorbox{lightblue}{#1}}
\newcommand{\textcolorhlblue}[2]{\hlblue{\textcolor{#1}{#2}}}
\newtheorem{thm}{Theorem}
\algnewcommand{\LeftComment}[1]{\Statex \(\triangleright\) #1}
\algnewcommand{\LeftCommentC}[1]{\Statex \(\mbox{\tt //}\) #1}
\newcommand{\inclusive}{\mbox{\tt inclusive}\xspace }
\newcommand{\exclusive}{\mbox{\tt exclusive}\xspace }
\newcommand{\required}{\mbox{\tt required}\xspace }
\newcommand{\always}{\mbox{\tt always}\xspace }
\newcommand{\eventually}{\mbox{\tt eventually}\xspace }
\newcommand{\until}{\mbox{\tt until}\xspace }
\newcommand{\once}{\mbox{\tt once}\xspace }
\newcommand{\timed}{\mbox{\tt timed}}
\newcommand{\before}{\mbox{\tt before}\xspace}
\newcommand{\after}{\mbox{\tt after}\xspace}
\renewcommand{\between}{\mbox{\tt between}\xspace}
\newcommand{\saltuntil}{\mbox{\tt until}\xspace}
\newcommand{\saltreleases}{\mbox{\tt releases}\xspace}
\newcommand{\saltand}{\mbox{\tt and}\xspace}
\newcommand{\saltor}{\mbox{\tt or}\xspace}
\newcommand{\saltnot}{\mbox{\tt not}\xspace}
\newcommand{\saltimplies}{\mbox{\tt implies}\xspace}
\newcommand{\saltalways}{\mbox{\tt always}\xspace}
\newcommand{\salteventually}{\mbox{\tt eventually}\xspace}
\newcommand{\saltnext}{\mbox{\tt next}\xspace}
\newcommand{\weak}{\mbox{\tt weak}\xspace}
\newcommand{\optional}{\mbox{\tt optional}\xspace}
\newcommand{\stopcond}{\textsc{stop}\xspace}
\newcommand{\SALT}{\textsc{SALT}}
\newcommand{\tFormula}{$\mathit{tFormula}$}
\newcommand{\tform}{$\mathit{tform}$}
\newcommand{\ltlFormula}{$\mathit{ltlFormula}$}
\newcommand{\ltlform}{$\mathit{ltlform}$}
\newcommand{\pFormula}{$\mathit{pFormula}$}
\newcommand{\baseForm}{$\mathit{baseForm}$}
\newcommand{\trigger}{$\mathit{trigger}$}
\newcommand{\formula}{$\mathit{formula}$}
\newcommand{\sform}{$\mathit{sform}$}
\newcommand{\LEFT}{\textsc{Left}}
\newcommand{\RIGHT}{\textsc{Right}}
\newcommand{\INSTANTIATE}{\textsc{Instantiate}}
\newcommand{\FFiM}{\mbox{\rm FFiM}\xspace}
\newcommand{\FLiM}{\mbox{\rm FLiM}\xspace}
\newcommand{\FiM}{\mbox{\rm FiM}\xspace}
\newcommand{\LiM}{\mbox{\rm LiM}\xspace}
\newcommand{\FNiM}{\mbox{\rm FNiM}\xspace}
\newcommand{\LNiM}{\mbox{\rm LNiM}\xspace}
\newcommand{\FTP}{\mbox{\rm FTP}\xspace}
\newcommand{\modevariable}{\textsc{mode}\xspace}  
\newcommand{\bound}{\textsc{bound}\xspace}
 \newcommand{\formulaltlnospace}{formulaltl}
 \newcommand{\probForm}{probform\xspace}
\newcommand{\durn}{\textsc{n}}
\newcommand{\dur}{\textsc{n}\xspace}
\newcommand{\response}{\textsc{res}\xspace}
\newcommand{\saltap}{\textsc{probform}\xspace}
\newcommand{\condition}{\textsc{cond}\xspace}
\let\oldReturn\Return
\renewcommand{\Return}{\State\oldReturn}
\newcommand{\fret}{\textsc{FRET}\xspace}
\newcommand{\fretish}{\textsc{FRETish}\xspace}
\newcommand{\salt}{\textsc{SALT}\xspace}
\newcommand{\prism}{\textsc{prism}\xspace}
\newcommand{\leftend}{\textsc{left}\xspace}   
\newcommand{\leftendnospace}{\textsc{left}}
\newcommand{\commentout}[1]{}
\newcounter{template}
\newcommand{\fretafter}{{\textit{after}}\xspace}
\newcommand{\fretbefore}{{\textit{before}}\xspace}
\newcommand{\fretduring}{{\textit{during}}\xspace}
\newcommand{\fretin}{{\textit{in}}\xspace}
\newcommand{\fretnotin}{{\textit{not~in}}\xspace}
\newcommand{\fretonlyafter}{{\textit{only after}}\xspace}
\newcommand{\fretonlybefore}{{\textit{only before}}\xspace}
\newcommand{\fretonlyin}{{\textit{only in}}\xspace}
\newcommand{\fretimmediately}{{\textit{immediately}}\xspace}
\newcommand{\fretnext}{{\textit{next}}\xspace}
\newcommand{\fretnever}{{\textit{never}}\xspace}
\newcommand{\freteventually}{{\textit{eventually}}\xspace}
\newcommand{\fretalways}{{\textit{always}}\xspace}
\newcommand{\fretwithin}{{\textit{within}}\xspace}
\newcommand{\fretfor}{{\textit{for}}\xspace}
\newcommand{\fretuntil}{{\textit{until}}\xspace}
\newcommand{\mode}{\textsc{mode}\xspace}
\newcommand{\rightend}{\textsc{right}\xspace}  
\newcommand{\rightendnospace}{\textsc{right}}  
\definecolor{cdarkgreen}{rgb}{0.0,0.4,0.0}
\definecolor{customblue}{rgb}{0.0,0.0,0.7}
\definecolor{cbluegreen}{rgb}{0.0,0.4,0.7}
\definecolor{cpurple}{rgb}{0.5,0.0,0.7}
\definecolor{corange}{rgb}{0.8,0.6,0.2}
\definecolor{cgreen}{rgb}{0,0.6,0}
\colorlet{commentcolour}{green!50!black}
\colorlet{keywordcolour}{magenta!90!black}
\definecolor{MyDarkGreen}{rgb}{0.0,0.4,0.0}
\definecolor{MyBlue}{rgb}{0.0,0.0,0.7}
\definecolor{MyPurple}{rgb}{0.7,0.0,0.7}
\colorlet{punct}{red!60!black}
\definecolor{background}{HTML}{EEEEEE}
\definecolor{delim}{RGB}{20,105,176}
\colorlet{numb}{magenta!60!black}
\newcommand{\bArrow}[1]{\boldsymbol{#1}}
\lstdefinelanguage{smv}{
    literate=
      {->}{{{\color{corange}{$\bArrow{\rightarrow}$}}}}{1}
      {!}{{{\color{corange}{\textbf{!}}}}}{1}
      {\&}{{{\color{corange}{\textbf{\&}}}}}{1}
      {|}{{{\color{corange}{\textbf{|}}}}}{1}
      {=}{{{\color{corange}{\textbf{=}}}}}{1}
      {>}{{{\color{corange}{\textbf{>}}}}}{1}
      {<}{{{\color{corange}{\textbf{<}}}}}{1}
      {:}{{{\color{corange}{\textbf{:}}}}}{1}
      {)}{{{\color{corange}{\textbf{)}}}}}{1}
      {(}{{{\color{corange}{\textbf{(}}}}}{1}
      {?}{{{\color{corange}{\textbf{?}}}}}{1}
      {;}{{{\color{corange}{\textbf{;}}}}}{1}
      {+}{{{\color{corange}{\textbf{+}}}}}{1},
    morekeywords=[1]{LAST,FTP},
    keywordstyle=[1]\color{MyPurple}\bfseries,
    morekeywords=[2]{G, S, H, F, X, Y, U, O, SI},
    keywordstyle=[2]\color{customblue}\bfseries,
    morekeywords=[3]{MODULE,main,VAR,ASSIGN,DEFINE,LTLSPEC,NAME},
    keywordstyle=[3]\bfseries,
    morekeywords=[4]{LAST,MODE,COND,RES},
    keywordstyle=[4]\color{MyPurple}\bfseries,
    morekeywords=[5]{V},
    keywordstyle=[5]\color{corange}\bfseries,
    basicstyle=\scriptsize \ttfamily
}
\newcommand{\new}[1]{#1}
\definecolor{prismblue}{rgb}{0, 0, 0.8}
\definecolor{prismgreen}{rgb}{0, 0.6, 0}
\lstdefinelanguage{Prism}{ 
        basicstyle=\color{blue}\scriptsize\sffamily, 
        keywords= {bool,C,ceil,const,ctmc,double,dtmc,endinit,endmodule,endrewards,endsystem,F,false,floor,formula,G,global,I,init,int,label,max,mdp,min,module,nondeterministic,P,Pmin,Pmax,prob,probabilistic,R,rate,rewards,Rmin,Rmax,S,stochastic,system,true,U,X,},
        keywordstyle={\bfseries\color{black}},
        numberstyle=\tiny\color{black},
        belowcaptionskip=\baselineskip,
        comment=[l] {//}, morecomment=[s]{/*}{*/}, 
        commentstyle= \color{prismgreen}, 
        tabsize=4, 
        captionpos=b, 
        escapechar=@, 
        literate={;}{\textcolor{black}{;}}1 
                 {=}{\textcolor{black}{=}}1
                 {:}{\textcolor{black}{:}}1
                 {[}{\textcolor{black}{[}}1
                 {]}{\textcolor{black}{]}}1
                 {<}{\textcolor{black}{$<$}}1
                 {>}{\textcolor{black}{$>$}}1 
                 {+}{\textcolor{black}{$+$}}1 
                 {->}{\textcolor{black}{$\rightarrow$}}1
                 {&}{\textcolor{black}{\&}}1 
                 {"}{\textcolor{black}{``}}1
}
\begin{document}
\begin{frontmatter}
\title{Automated Formalization of Probabilistic Requirements from Structured Natural Language}
  \author[inst1]{Anastasia Mavridou\corref{cor1}}
  \ead{anastasia.mavridou@nasa.gov}
  \cortext[cor1]{Corresponding author}
    \author[inst2]{Marie Farrell\corref{cor1}}
  \ead{marie.farrell@manchester.ac.uk}
    \author[inst3]{Gricel V\'azquez\corref{cor1}}
  \ead{gricel.vazquez@york.ac.uk}
      \author[inst4]{Tom Pressburger}
  \ead{tom.pressburger@nasa.gov}
      \author[inst5]{Timothy E. Wang}
  \ead{timothy.wang@rtx.com}
      \author[inst3]{Radu Calinescu}
  \ead{radu.calinescu@york.ac.uk}
      \author[inst2]{Michael Fisher}
  \ead{michael.fisher@manchester.ac.uk}

  \affiliation[inst1]{KBR Inc., NASA Ames Research Center
    addressline={Moffett Field}, city={California}, postcode={CA 94035}, country={USA}}
  \affiliation[inst2]{organization={Department of Computer Science, The University of Manchester},
    addressline={Oxford Road}, city={Manchester}, postcode={M13 9PL}, country={United Kingdom}}
  \affiliation[inst3]{organization={Department of Computer Science, University of York},
    addressline={Heslington}, city={York}, postcode={YO10 5DD}, country={United Kingdom}}
  \affiliation[inst4]{organization={NASA Ames Research Center},
    addressline={Moffett Field}, city={California}, postcode={CA 94035}, country={USA}}
  \affiliation[inst5]{organization={RTX Technology Research Center},
    addressline={411 Silver Ln}, city={East Hartford}, postcode={CT 06118}, country={USA}}
\begin{abstract}
\new{
\noindent \textbf{Context:} Integrating autonomous and adaptive 
behavior into software-intensive systems presents significant challenges for software development, as uncertainties in the environment or decision-making processes must be explicitly captured. These challenges are amplified in safety- and mission-critical systems, which must undergo rigorous scrutiny during design and development. Key among these challenges is the difficulty of specifying requirements that use probabilistic constructs to capture the uncertainty affecting these systems. To enable formal analysis, such requirements must be expressed in precise mathematical notations such as probabilistic logics. However, expecting developers to write requirements directly in complex formalisms is unrealistic and highly error-prone. \\
\noindent \textbf{Objectives:} We extend the structured natural language used by NASA's Formal Requirement Elicitation Tool (\fret) with support for the specification of unambiguous and correct probabilistic requirements, and develop an automated approach for translating these requirements into logical formulas. \\
\noindent \textbf{Methods:} We propose and develop a formal, compositional, and automated approach for translating structured natural-language requirements into formulas in probabilistic temporal logic. To increase trust in our formalizations, we provide assurance that the generated formulas are well-formed and conform to the intended semantics through an automated validation framework and a formal proof. \\
\noindent \textbf{Results:} We demonstrate applicability through a comprehensive evaluation using more than $300$ requirements from the research literature, public repositories and an industry-led case study. The results of this evaluation show that our approach can successfully specify a wide range of probabilistic requirements, confirming its expressiveness and utility. \\
\noindent \textbf{Conclusion: } The extended FRET tool enables developers to specify probabilistic requirements in structured natural language, and to automatically translate them into probabilistic temporal logic, making the formal analysis of autonomous and adaptive systems more practical and less error-prone.}
\end{abstract}

\begin{keyword}
Probabilistic requirements, Requirements formalization, Probabilistic Temporal Logic, FRET, PCTL*
\end{keyword}
\end{frontmatter}

\section{Introduction}
\label{sec:intro}
In hazardous environments, such as nuclear decommissioning and space exploration, there is an increased desire and need for autonomous systems~\cite{fisher2021overview}. These systems must react to an, often uncertain, environment whilst relying on imperfect and degraded sensor data. Autonomous systems in these domains are typically (mission- and/or safety-) critical, as failures can lead to the loss of costly operations and/or pose serious risks to human safety. For critical systems, various standards and regulations, such as DO-178C in the aerospace domain, mandate a requirements-driven approach to development~\cite{rierson2017developing}. \new{The fact that autonomy is often achieved via the use of learning-enabled components (LECs), }combined with environmental variability and imperfect sensing, demands requirements that employ probabilities to effectively capture uncertainty~\cite{farrell2023exploring}. 

There are two important challenges associated with Requirements Engineering (RE), both aggravated by the use of probabilities in autonomous systems. First, RE is an inherently human-centric activity and so the language chosen to specify requirements 
is important for usability and interpretation of requirements. To this end, developers typically write requirements in intuitive natural language, which however can be extremely ambiguous~\cite{farrell2022fretting}. \new{Formal languages such as logic-based ones avoid such ambiguity, but expecting developers to write requirements directly in these complex formalisms is unrealistic and highly error-prone~\new{\cite{rozier2016specification}}.} Adding probabilities  exacerbates the existing ambiguity issue. The second challenge is related to connecting requirements with formal analysis tools to verify whether the systems under development meet their requirements. 
A variety of analysis techniques have been developed for specifications written in probabilistic temporal logics, including probabilistic model checking and runtime monitoring~\cite{KNP11,dehnert2017storm,hahn2014iscas,grunske09}. However, expecting developers to directly specify requirements in such complex formalisms is often error prone or even outside their skill set.

To address these challenges, NASA Ames Research Center with contributions from multiple other research teams, developed the open-source tool \fret~\cite{giannakopoulou2020formal}, which enables developers to write unambiguous requirements in a structured natural language called \fretish. The underlying semantics of a \fretish requirement is defined by the combination of values assigned to three key fields of the language: \textit{scope}, \textit{condition} and \textit{timing}. Each unique combination of these field values corresponds to a specific \emph{template key}, which acts as a blueprint that guides how the requirement is translated into a formal representation. By allowing $8$ options for \textit{scope}, $2$ options for 
\textit{condition} and $10$ options for \textit{timing}, \fret supports $8 \times 2 \times 10 = 160$
distinct template keys. Even though \emph{classic \fret} is already expressive, none of these combinations supports the specification of probabilistic requirements.

Specifically, we contribute the following:
\begin{enumerate}
    \item Extensions to the \fretish language, including
(i)  a new \textit{probability} field for expressing \emph{probabilistic requirements}; and
(ii) a new value for the \textit{condition} field. 
%
These two extensions add 80 new template keys to the classic \fretish{} and 320 to its probabilistic extension, significantly increasing the total from 160 to 560.
    \item A formal, compositional and automated translation from the extended \fretish language into Probabilistic Computation Tree Logic Star (PCTL$^*$)~\cite{aziz1995usually,baieralgorithmic} formulas. 
    We prove that the generated  formulas are well-formed.
    \item An oracle-based, automated validation framework that helps to ensure that the generated PCTL$^*$  formulas conform to the intended semantics of the requirements. 
    \item An extensive evaluation of the effectiveness of our approach on a diverse set of requirements --- both formal and natural language --- sourced from the research literature and industry.
\end{enumerate}
Our tool and evaluation requirements are all available at~\cite{FRETGithubX}.

\subsection{Related work}  
The elicitation, formalization, and analysis of requirements has been the focus of significant research by the automated software engineering community (e.g.,~\cite{feng2023towards,sorathiya2024towards,fazelnia2024translation,kolthoff2024self}). Many approaches aim to simplify formal specification by defining reusable patterns that can then be instantiated to obtain system requirements. A notable example is the Specification Pattern System (SPS)~\cite{dwyer1999patterns} with later extensions for real-time properties~\cite{konrad2005real}. Tools like Prospec \cite{Prospec}, SPIDER \cite{SPIDER} and SpeAR \cite{SpeAR} assist users in writing pattern-based requirements. The EARS approach~\cite{EARS2} offers five informal templates that have been shown to cover most high-level requirements. Other tools such as STIMULUS~\cite{STIMULUS} ASSERT~\cite{ASSERT,ASSERT2}, and SLEEC~\cite{getir2023specification} use restricted natural language (e.g., drag and drop phrases) to support specification. However, none of these approaches and tools support probabilistic requirements. 


Other approaches focus on specifying patterns for probabilistic properties. ProProST \cite{grunske2008specification} defines eight patterns that were derived from the literature and formalized
in probabilistic logics. Like \fret, ProProST supports specification in a structured English grammar to express requirements naturally. The Property Specification Pattern Framework (PSPFramework) \cite{autili2015aligning}, unifies and extends previous work on real-time and probabilistic property patterns. The PSPWizard tool supports building formalizations by choosing from its pattern library. Similarly, the QUARTET catalog provides RPCTL (PCTL augmented with rewards~\cite{forejt2011automated}) patterns for the specification of precise robotic mission requirements~\cite{QUARTET,vazquez2024robotics},  supported by a tool that provides a pattern-based domain-specific language. 

Unlike other approaches that rely on menu-based selection from predefined pattern libraries, \fret enables users to express requirements directly in a restricted form of natural language. This bridges the gap between the rigor of formal specifications (often unintuitive) and the readability of unrestricted natural language (typically ambiguous).
Earlier classic \fret supported 160 template keys; our extensions increase this to 560 and add support for expressing uncertainty. Each template key is translated into formulas compositionally, avoiding manual, hardcoded mappings and thus enhancing maintainability. 
Finally, we provide a validation framework that systematically checks the correctness of generated formulas, providing a level of assurance not offered by prior works.

\new{The remainder of this paper is structured as follows. We provide an overview of our approach in Section \ref{sec:overview}. Then, Section \ref{sec:specification} outlines the improvements that we have made to the \fretish specification language to enable the formalization of probabilistic requirements. Section \ref{sec:compositional_formalization} details our compositional algorithm for generating PCTL* specifications from probabilistic \fretish requirements. Section \ref{sec:testing} describes our automated validation framework, designed to check that the formalizations produced by FRET are correct. We provide a detailed evaluation in Section \ref{sec:evaluation}, where we assemble a corpus of 334~requirements and answer three research questions related to the expressibility and usefulness of our approach, along with assessing the commonalities amongst requirements in this corpus. In Section~\ref{sec:discussion}, we discuss several challenges that we encountered during this project, and the use of structured natural language as a mediator between LLM-generated natural language and formalized requirements. Section \ref{sec:threats} outlines the internal and external threats to validity that are applicable to this work. Finally, Section~\ref{sec:conclude} concludes the paper with a brief summary, and outlines future research directions.}
\section{Overview}
\label{sec:overview}
\begin{figure}[t]
    \centering
    \includegraphics[width=0.85\linewidth]{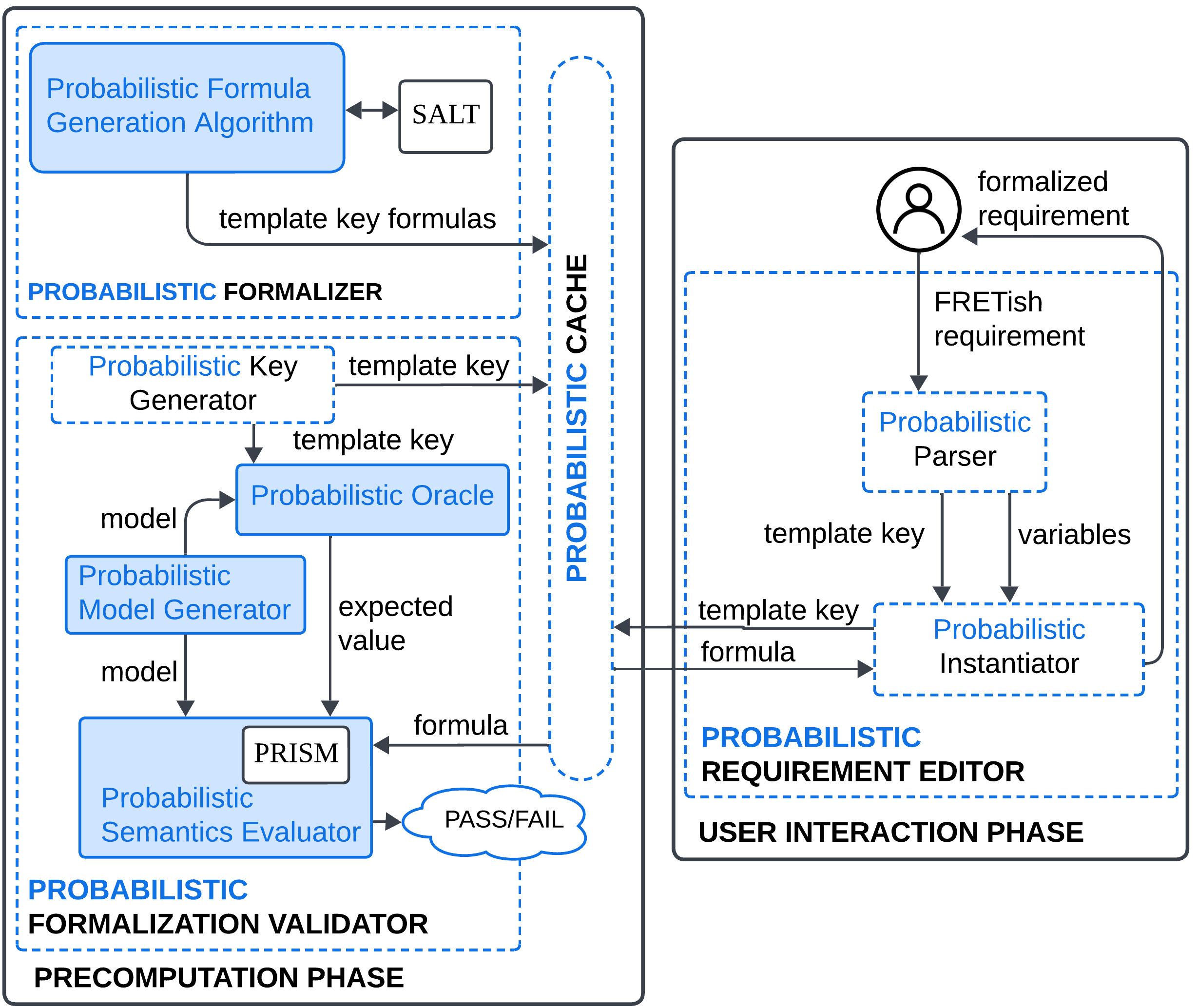}
    \caption{Probabilistic \fret: extensions to classic \fret components are shown with dashed lines; new components are highlighted with solid lines and a colored background.} 
    \label{fig:architecture}
\end{figure}

Figure~\ref{fig:architecture} presents the components of our framework. 
The figure is divided into two main parts: (i) the \textsf{\small{PRECOMPUTATION PHASE}} on the left hand side and (ii) the \textsf{\small{USER INTERACTION PHASE}} on the right hand side. The formalization process is performed once during the precomputation phase by the \textsf{\small{PROBABILISTIC FORMALIZER}} and then stored in the \textsf{\small{PROBABILISTIC CACHE}} in the form of a JSON file. Our \textsf{\small{Probabilistic Formula Generation Algorithm}} --- described in detail in Section~\ref{sec:algorithm} --- automatically translates each template key into a PCTL$^*$  formula. To support the generation of PCTL$^*$  \textit{path formulas}, which may embed arbitrary LTL expressions as their temporal core, we use the Structured Assertion Language for Temporal Logic (\salt)~\cite{SALT}. \salt supports the automated simplification and generation of Linear Temporal Logic (LTL)~\cite{manna1992temporal} from a higher-level specification language. The resulting PCTL$^*$  formulas  are expressed using the format of the PRISM property specification language~\cite{PrismPropSpec}. This choice of the PRISM language is motivated by its widespread support among probabilistic analysis tools, including the PRISM~\cite{KNP11}, STORM \cite{dehnert2017storm} and iscasMC \cite{hahn2014iscas} model checkers. Moreover, it serves as a convenient intermediate representation that can be easily adapted for compatibility with other tools, e.g.,  
~\cite{
farrell2025quantitative}.

To validate correctness, the \textsf{\small{PROBABILISTIC FORMALIZATION VALIDATOR}} (Section~\ref{sec:testing}) offers a modular framework for checking that the generated formulas accurately capture the intended semantics. It includes the extended \textsf{\small{Probabilistic Key Generator}}, which produces all new template keys, and three new components: (i) the \textsf{\small{Probabilistic Model Generator}}, which constructs random models; (ii) the \textsf{\small{Probabilistic Oracle}}, which interprets the template key and returns the expected result for a given model; and (iii) the \textsf{\small{Probabilistic Semantics Evaluator}}, which uses PRISM to verify that the PCTL$^*$ formula, when evaluated on the model, matches the oracle’s output.

\begin{figure}[t]
    \centering
      \includegraphics[width=\linewidth]{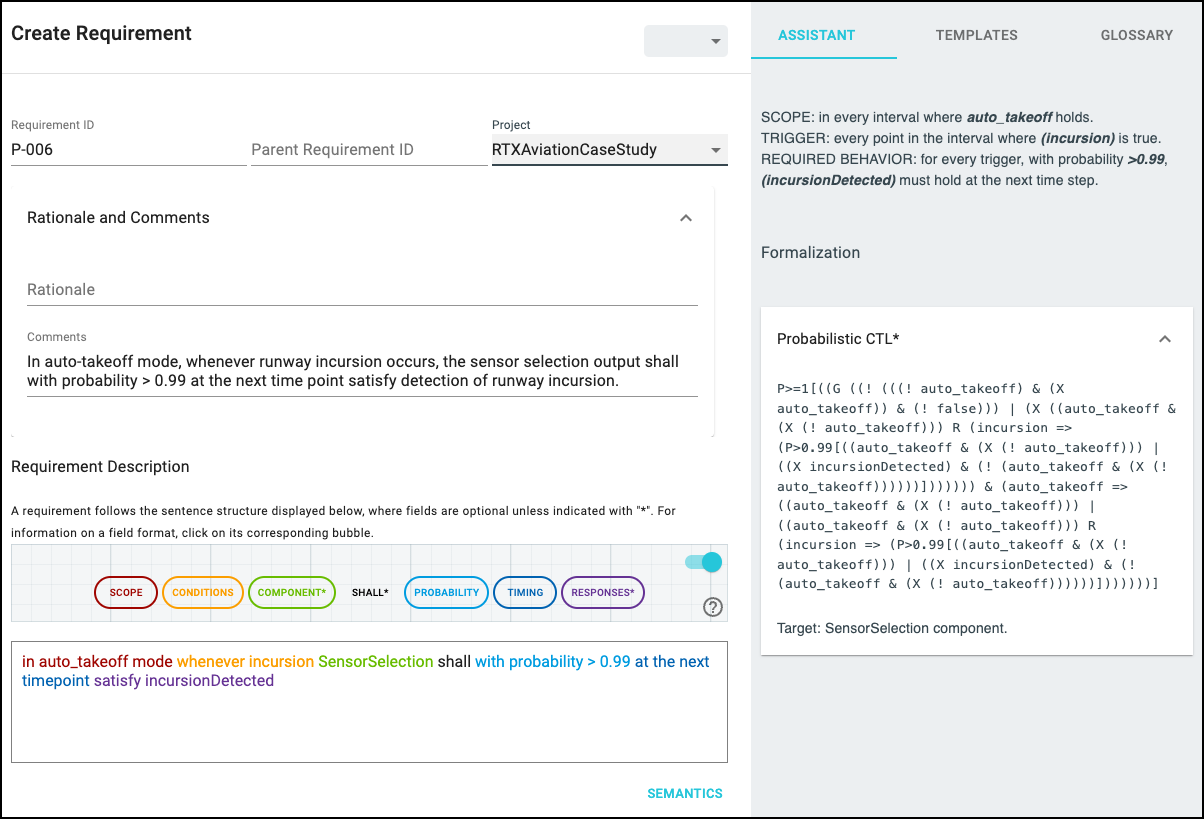}
    \caption{\new{\fret's probabilistic requirement editor. On the left, the requirement is entered in \fretish. The initial, unstructured natural language version of the requirement (pure English) is entered in the Comments field. On the right, \fret provides an explanation of the requirement semantics and the generated PCTL* formula.}}
    \label{fig:specification}
\end{figure}

On the right of Figure~\ref{fig:architecture} we see how users interact with \fret.  Requirements are written in extended \fretish (Section~\ref{sec:specification}). 

\new{Figure~\ref{fig:specification} shows the extended \fret editor with requirement \textbf{[P-006]} being entered. This requirement originates from an RTX\footnote{RTX is a corporate conglomerate of Collins Aerospace, Pratt \& Whitney and Raytheon.} Technology Research Center (RTRC) case study focused on autonomous taxiing, take-off, and landing systems (Section~\ref{sec:industrialcs}), and serves as a running example throughout this paper. The natural language version of the requirement is entered in the Comments section of the editor. As the user types, the \textsf{\small{Probabilistic Parser}} identifies the corresponding template key and extracts the relevant variables. The \textsf{\small{Probabilistic Instantiator}} then retrieves the corresponding PCTL* formula from the \textsf{\small{PROBABILISTIC CACHE}}, instantiates it with the extracted variables, and returns the formalized requirement to the user, as shown in the \emph{assistant tab} on the right side of Figure~\ref{fig:specification}.}



\begin{table}[htbp]
\centering
\caption{Extended FRETish Grammar. \texttt{\small{ID}} is an alphanumerical string, \texttt{\small{prob\_bound}} \small{$\in \mathbb{R}^{[0,1]}$}  and \texttt{\small{NUMBER}} \small{$\in \mathbb{R}^{+}$}. \new{Grammar extensions are highlighted in \textcolorhlblue{blue}{blue}, while classic \fret constructs appear in black colored font.}}

{\renewcommand{\arraystretch}{0.99}
\resizebox{\textwidth}{!}{%
\begin{tabular}{l l p{16cm}}
\toprule
\textbf{Rule} & & \textbf{Definition} \\
\midrule
\texttt{requirement} & \texttt{::=} & \texttt{(scope)? (condition)? (component \textbf{SHALL} $\mid$ \textbf{SHALL} component) \textcolorhlblue{blue}{(probability)?} (timing)? response} \newline \\ 

\texttt{scope} & \texttt{::=} &
\texttt{\textbf{ONLY} ((\textcolorhlblue{blue}{\textbf{DURING} $\mid$} (\textcolorhlblue{blue}{(\textbf{WHEN} $\mid$ \textbf{IF})?} \textbf{IN})) scope\_mode} 
\texttt{ \textcolorhlblue{blue}{$\mid$ 
 \textbf{WHILE} scope\_condition}} \newline
\makebox[10pt][l]{} \texttt{ $\mid$ (\textbf{AFTER} $\mid$ \textbf{BEFORE}) (scope\_mode \textcolorhlblue{blue}{$\mid$ scope\_condition}))} \newline
\texttt{$\mid$ \textbf{EXCEPT} ((\textcolorhlblue{blue}{\textbf{DURING} $\mid$} (\textcolorhlblue{blue}{(\textbf{WHEN} $\mid$ \textbf{IF})?} \textbf{IN})) scope\_mode} 
\texttt{ \textcolorhlblue{blue}{$\mid$ \textbf{WHILE} scope\_condition}})\newline
\texttt{$\mid$ \textcolorhlblue{blue}{(\textbf{WHEN} $\mid$ \textbf{IF})} \textbf{NOT}? \textbf{IN} scope\_mode} \newline
\texttt{$\mid$ (\textbf{IN} \textcolorhlblue{blue}{$\mid$ \textbf{DURING}}) scope\_mode} \newline
\textcolorhlblue{blue}{\texttt{$\mid$ \textbf{UNLESS IN} scope\_mode}} \newline
\textcolorhlblue{blue}{ \texttt{$\mid$ \textbf{WHILE} scope\_condition}} \newline
\texttt{$\mid$ (\textbf{AFTER} $\mid$ \textbf{BEFORE}) (scope\_mode $\mid$ scope\_condition)} \\
\texttt{scope\_mode} & \texttt{::=} & \texttt{(\textbf{MODE} mode\_name) $\mid$ (mode\_name \textbf{MODE}) $\mid$ mode\_name} \\
\texttt{mode\_name} & \texttt{::=} & \texttt{ID} \\
\textcolorhlblue{blue}{\texttt{scope\_condition}} & \textcolorhlblue{blue}{\texttt{::=}} &  \textcolorhlblue{blue}{\texttt{boolean\_expression}} \newline \\

\texttt{condition} & \texttt{::=} & \texttt{(\textbf{AND})? condition\_expression} \\
\texttt{condition\_expression} & \texttt{::=} & \texttt{qualified\_condition1 ( qualified\_condition2 )*} \\
\texttt{qualified\_condition1} & \texttt{::=} & \texttt{qualifier\_word pre\_condition \textcolorhlblue{blue}{(\textbf{IS} (\textbf{TRUE} $\mid$ \textbf{FALSE}))?}} \\
\texttt{qualified\_condition2} & \texttt{::=} & \texttt{(\textbf{AND} $\mid$ \textbf{OR})? qualifier\_word pre\_condition \textcolorhlblue{blue}{(\textbf{IS} (\textbf{TRUE} $\mid$ \textbf{FALSE}))?}} \\
\texttt{qualifier\_word} & \texttt{::=} & \texttt{\textbf{UPON} \textcolorhlblue{blue}{$\mid$ \textbf{WHENEVER}} $\mid$ \textbf{WHEN} $\mid$ \textbf{UNLESS} $\mid$ \textbf{WHERE} $\mid$ \textbf{IF}} \\
\texttt{pre\_condition} & \texttt{::=} & \texttt{boolean\_expression} \newline \\

\texttt{component} & \texttt{::=} & \texttt{(\textbf{THE})? component\_name} \\
\texttt{component\_name} & \texttt{::=} & \texttt{ID} \newline \\

\textcolorhlblue{blue}{\texttt{probability}} & \textcolorhlblue{blue}{\texttt{::=}} & \textcolorhlblue{blue}{\texttt{\textbf{WITH} probability\_aux}} \\
\textcolorhlblue{blue}{\texttt{probability\_aux}} & \textcolorhlblue{blue}{\texttt{::=}} & \textcolorhlblue{blue}{\texttt{\textbf{PROBABILITY} relational\_op prob\_bound}} \\
\textcolorhlblue{blue}{\texttt{relational\_op}} & \textcolorhlblue{blue}{\texttt{::=}} & 
\textcolorhlblue{blue}{\texttt{\textbf{<} $\mid$ \textbf{<=} $\mid$ \textbf{>} $\mid$ \textbf{>=}}} \newline \\

\texttt{timing} & \texttt{::=} &
\texttt{\textbf{WITHIN} duration $\mid$ \textbf{FOR} duration $\mid$ \textbf{AFTER} duration} \newline
\texttt{$\mid$ \textbf{UNTIL} stop\_condition $\mid$ \textbf{BEFORE} stop\_condition} \newline
\texttt{$\mid$ \textbf{AT} \textbf{THE} \textcolorhlblue{blue}{(\textbf{FIRST} $\mid$ \textbf{SAME} } $\mid$ }\texttt{ \textbf{NEXT}) \textbf{TIMEPOINT}} \newline
\texttt{$\mid$ \textbf{IMMEDIATELY} \textcolorhlblue{blue}{$\mid$ \textbf{INITIALLY}} $\mid$  \textbf{EVENTUALLY} $\mid$ \textbf{ALWAYS} $\mid$ \textbf{NEVER}} \\
\texttt{stop\_condition} & \texttt{::=} & \texttt{boolean\_expression} \\
\texttt{duration} & \texttt{::=} & \texttt{NUMBER timeunit} \\
\texttt{timeunit} & \texttt{::=} & \texttt{\textbf{TICKS} $\mid$ \textbf{MICROSECONDS} $\mid$ \textbf{MILLISECONDS} $\mid$ \textbf{SECONDS} $\mid$ \textbf{MINUTES} $\mid$ \textbf{HOURS}} \newline \\

\texttt{response} & \texttt{::=} & \texttt{satisfaction} \\
\texttt{satisfaction} & \texttt{::=} & \texttt{\textbf{SATISFY} post\_condition} \\
\texttt{post\_condition} & \texttt{::=} & \texttt{boolean\_expression} \\
\bottomrule
\end{tabular}
}
}
\label{table:fret-full-grammar}
\end{table}

\section{Requirement Specification Language}
\label{sec:specification}

In this section, we present our extensions to the classic FRETish language. For completeness, since our formalization algorithm (Section~\ref{sec:algorithm}) uses the full \fretish syntax, we describe both the classic FRET constructs~\cite{giannakopoulou2020formal, GIANNAKOPOULOU2021106590} and our enhancements, explicitly distinguishing where appropriate. Table~\ref{table:fret-full-grammar} presents the syntax of the extended FRETish grammar --- grammar extensions are highlighted in \textcolorhlblue{blue}{blue}, while classic \fret constructs appear in black colored font. The grammar's format follows ANTLR's grammar syntax~\cite{antlr4}, which is based on Extended Backus-Naur Form (EBNF). In the grammar, terminal symbols are given in capital-letters and \textbf{bold} font, whereas non-terminal symbols are given in a \texttt{monospaced} font. Optional elements are denoted using \mbox{`( )?'}. For example, \texttt{\small{(\textbf{WHEN)? IN}}} indicates that \texttt{\small{\textbf{WHEN}}} may be omitted if doing so results in a sentence that better resembles natural language for the user. The notation `()*' indicates zero or more repetitions of the grouped element, and `|' denotes alternative options.

 With our extensions, a \fretish requirement consists of seven fields. As the requirement is entered, the \fret editor dynamically colors the text corresponding to each field (see Figure~\ref{fig:specification}). The fields are \textcolor{Mahogany}{\small{\texttt{scope}}}; \textcolor{orange}{\small{\texttt{condition}}}; \textcolor{ForestGreen}{\small{\texttt{component*}}}; \texttt{\small{shall*}}; \textcolor{cyan}{\small{\texttt{probability}}}; \textcolor{blue}{\small{\texttt{timing}}}; and \textcolor{violet}{\small{\texttt{response*}}}. Fields marked with an asterisk are mandatory. 

We use requirement \textbf{[P-006]} to explain the \fretish fields. We start with the mandatory fields. The \textcolor{ForestGreen}{\small{\texttt{component}}} field specifies the system component that the requirement applies to (e.g., \textcolor{ForestGreen}{SensorSelection}), and is defined as a string. The \texttt{shall} keyword states that the component's behavior must conform to the requirement. The \textcolor{violet}{\small{\texttt{response}}} field has the form \emph{satisfy R} (e.g., \textcolor{violet}{satisfy incursionDetected}), where \emph{R} is a Boolean expression that the component's behavior must satisfy.

The \textcolor{Mahogany}{\small{\texttt{scope}}} field specifies the \emph{modes of operation} relevant to the component's behavior (e.g., \textcolor{Mahogany}{in auto\_takeoff mode}) by defining the time intervals during which the requirement is enforced. If omitted, scope means \emph{global}, i.e., entire execution. For a \mode (e.g., \textcolor{Mahogany}{auto\_takeoff}), \fret supports seven relationships: \fretbefore\ \mode (the requirement is enforced strictly before the first point at which \mode holds); \fretafter\ \mode (the requirement is enforced strictly after the last point at which \mode holds); \fretin\ \mode (or the synonyms \fretduring, \emph{when in}, \emph{if in}, \emph{while}) the requirement is enforced when the component is in \mode; and 
\fretnotin\ \mode (or synonyms \emph{except in}, \emph{unless in}) the requirement is enforced when the component is \emph{not} in \mode. 
Sometimes, it is necessary to specify that a requirement is enforced \emph{only} within a particular time interval, meaning that it should \emph{not} be satisfied outside of that interval. For this, the scopes \fretonlyafter, \fretonlybefore, and \fretonlyin are provided.

The \textcolor{orange}{\small{\texttt{condition}}} field is a Boolean expression that triggers the response within the specified scope. Our extension supports two types: (i) a \emph{regular condition}, through the \emph{upon} keyword (or synonyms \emph{when}, \emph{where} and \emph{if}), which triggers the response to occur at the time that the condition becomes true (from false); (ii) a new \emph{holding condition} through the \emph{whenever} keyword (e.g., \textcolor{orange}{whenever incursion}), which triggers the response to occur at all time points when the condition is true.

The new \textcolor{cyan}{\small{\texttt{probability}}} field that we introduce in this work specifies a lower or upper probabilistic bound on the timing response, indicating the likelihood that the system meets the timing constraint on the response within the specified limits (e.g., \textcolor{cyan}{with probability > 0.99}).

The \textcolor{blue}{\small{\texttt{timing}}} field specifies when the response is expected to hold (e.g., \textcolor{blue}{at the next timepoint}) relative to the \textcolor{Mahogany}{\small{\texttt{scope}}} and \textcolor{orange}{\small{\texttt{condition}}}. \fret supports ten timing options: \fretimmediately (with synonyms \emph{initially}, \emph{at the same timepoint}, \emph{at the first timepoint}), \fretnext, \fretnever, \freteventually, \fretalways, \fretwithin~\dur
time units, \fretfor~\dur time units, \fretafter~\dur time units (i.e., not within \emph{\dur} time units and at the \durn+1$^{st}$ time unit),  \fretuntil (response holds until a stop condition), and \fretbefore (response holds before a stop condition). When timing is omitted, \freteventually is used as default.

In summary, key extensions of this work include the new \textcolor{cyan}{\small{\texttt{probability}}} field, which enables the specification of probabilistic requirements, and the introduction of \emph{holding} conditions. The latter significantly increases the expressiveness of classic \fretish (contributing 80 new template keys) and allows us to express infinitely-often (G F) requirements, which were not previously supported in \fret. It also enables the specification of certain probabilistic patterns, such as probabilistic response~\cite{grunske2008specification}.
 To support more natural expression of requirements, we also enriched the \fretish grammar with multiple syntactic alternatives for semantically equivalent constructs in the \textcolor{Mahogany}{\small{\texttt{scope}}}, \textcolor{orange}{\small{\texttt{condition}}} and \textcolor{blue}{\small{\texttt{timing}}} fields. E.g., the \emph{in} \mode construct can now be specified as \texttt{\small{\textbf{DURING} scope}}, \texttt{\small{\textbf{WHEN IN} scope}} or \texttt{\small{\textbf{WHILE} scope\_condition}}.

\subsection{Placement of the Probability Field.}
\label{sec:probabilityField}
The placement of the \textcolor{cyan}{\small{\texttt{probability}}} field was informed by an analysis of probabilistic requirements from industrial case studies and academic literature. We considered several candidate positions, including placing it before the mandatory \textcolor{violet}{\small{\texttt{response}}} field and at the start of the requirement, before the \textcolor{Mahogany}{\small{\texttt{scope}}} field. 

Placing the \textcolor{cyan}{\small{\texttt{probability}}} field before the \textcolor{violet}{\small{\texttt{response}}} field results in a Boolean expression wrapped by the PCTL$^*$ $P$ operator, yielding a formula that evaluates to either 0 or 1. This construction limits expressiveness and fails to support the specification of meaningful probabilistic constraints.

Positioning the \textcolor{cyan}{\small{\texttt{probability}}} field at the very beginning of the \fretish structure, prior to \textcolor{Mahogany}{\small{\texttt{scope}}}, encapsulates the entire temporal requirement under the probabilistic operator. While this placement supports some types of probabilistic properties, it restricts the ability to express more nuanced formulations. Specifically, it prevents the specification of requirements in which the probabilistic constraint applies only to the consequent of a conditional statement (e.g., “If A, then with probability $\geq$ 0.9 eventually B”), i.e., to the \textcolor{blue}{\small{\texttt{timing}}}-\textcolor{violet}{\small{\texttt{response}}} portion of the requirement, rather than the entire temporal expression. Moreover, this placement introduces a risk of vacuous satisfaction, i.e.,  if the \textcolor{orange}{\small{\texttt{condition}}} rarely holds, and even if the \textcolor{violet}{\small{\texttt{response}}} always fails, the entire requirement could still be vacuously satisfied with high probability.

This led us to the decision to place the \textcolor{cyan}{\small{\texttt{probability}}} field directly before the \textcolor{blue}{\small{\texttt{timing}}} field. This placement is common in probabilistic logics including PCTL$^*$, and supports the specification of a wide range of meaningful probabilistic properties. It also offers flexibility --- when both \textcolor{Mahogany}{\small{\texttt{scope}}} and \textcolor{orange}{\small{\texttt{condition}}} are omitted, the probability applies to the entire requirement. Finally, when \textcolor{Mahogany}{\small{\texttt{scope}}} and \textcolor{orange}{\small{\texttt{condition}}} 
 are used, we avoid the vacuity issue described above since the probability governs only the \textcolor{blue}{\small{\texttt{timing}}}-\textcolor{violet}{\small{\texttt{response}}} portion of the requirement.

\section{Compositional Formalization}
\label{sec:compositional_formalization}

Our formalization algorithm (Algorithm~\ref{alg1}) is compositional, i.e., rather than manually defining a separate formula for each individual template key, we systematically construct complete PCTL$^*$  formulas by composing subformulas based on the values of the template fields. For example, consider requirement \textbf{[P-006]} and the construction of its corresponding template key. As outlined in the introduction, classic \fret template keys are derived from the values of the \textcolor{Mahogany}{\small{\texttt{scope}}}, \textcolor{orange}{\small{\texttt{condition}}}, and \textcolor{blue}{\small{\texttt{timing}}} fields. With our extensions, we introduce a fourth component, the \textcolor{cyan}{\small{\texttt{probability}}}  field, into the template key definition. Thus, the template key for \textbf{[P-006]} is: \emph{[in, holding, bound, next]}. To generate the corresponding PCTL$^*$  formula, the algorithm interprets and composes the semantics of the \textit{in} scope, \textit{holding} condition, probabilistic \textit{bound}, and \textit{next} timing.

Our algorithm builds on previous work by Giannakopoulou et al.~\cite{giannakopoulou2020generation,GIANNAKOPOULOU2021106590}, which introduced a compositional algorithm for generating \salt formulas from classic \fretish requirements, i.e., the non-highlighted parts of the grammar in Table~\ref{table:fret-full-grammar}, and translating them into LTL formulas. Our approach reuses the semantics of \fretish scope, timing and regular condition introduced in~\cite{GIANNAKOPOULOU2021106590}.\new{For consistency, we reuse the terminology and variable names from \cite{GIANNAKOPOULOU2021106590} where possible.}
We next briefly discuss PCTL$^*$, relevant \salt constructs, as well as \fretish scope intervals.

\subsection{Preliminaries}
\subsubsection{Probabilistic Computation Tree Logic Star}
\label{sec:preliminaries_pctl_salt}

PCTL$^*$~\cite{aziz1995usually,baieralgorithmic} is a probabilistic logic that is derived from branching tree logic CTL$^*$~\cite{emerson1986sometimes} by replacing the path quantifiers $\exists$ and $\forall$ with a probabilistic operator $P$. This
probabilistic operator defines an upper or a lower bound on the probability of the system evolution. As an example, the
formula $P_{\geq p}(\phi)$ is true at a given time, if the probability
that the future evolution of the system satisfies $\phi$ is at least
$p$~\cite{grunske09}.

Next, we briefly describe PCTL$^*$. For a detailed presentation of the logic, we refer the reader 
to~\cite{aziz1995usually,baieralgorithmic}. 
For a finite set $AP$ of atomic propositions, a \textit{PCTL$^*$ state formula} $\Phi$ and a \textit{PCTL$^*$ path formula} $\Psi$ are defined by the following grammar:

\centerline{$ \Phi ::= true \mid a \mid \Phi_1 \wedge \Phi_2 \mid \neg \Phi \mid P_{\sim p }[\Psi]$}
\centerline{$\Psi ::= \Phi \mid X \Psi \mid \Psi_1 U \Psi_2 \mid \Psi_1 U^{\leq n} \Psi_2 \mid \Psi_1 \wedge \Psi_2 \mid \neg \Psi$ }

\noindent where $a \in AP$, $\sim\; \in \{<,\leq, >, \geq\}$, $p\in [0,1]$ and $n$ is a natural number. $X$, $U$, $U^{\leq n}$ are the temporal operators `next', `until' and `bounded until', respectively. The $X$ operator refers to the next time point, i.e., $X$ $\Psi$ holds iff $\Psi$ is true at the next time point. The time-bounded `until' formula $\Psi_1 U^{\leq n} \Psi_2$ requires that $\Psi_1$ holds continuously within time steps $[0, x)$, where $x \leq n$ and $\Psi_2$ becomes true at time point $x$. The untimed path formula $\Psi_1 U \Psi_2$ can be obtained by setting $n$ to $\infty$. Other Boolean operators, such as disjunction ($\lor$) and implication ($\Rightarrow$), can be expressed using the primary Boolean operators: negation ($\neg$) and conjunction ($\land$).
For path formulae, the temporal operators `eventually' $F$ and `globally' $G$ (as well as their time-bounded variants $F^{\leq k}$ and $G^{\leq n}$) can be derived from the `until' operator~\cite{grunske09}; $F\ \phi$ is true iff $\phi$ holds at the current or some future time point; $G\ \phi$ is true iff $\phi$ is always true in the future.
Note that  $\Psi$ can be any LTL expression, which allows arbitrary combinations of path formulas. 

\subsubsection{Structured Assertion Language for Temporal Logic (SALT)}
We use \salt to generate PCTL$^*$  \emph{path formulas}, leveraging its propositional operators: {\small{\saltnot, \saltand, \saltor, \saltimplies}}, its temporal operators: {\small{\until, \always, \eventually}} and {\small{\saltnext}}, and timed modifier: {\small{\timed[$\sim$]}}, where $\sim$ is one of $<$ or $\leq$ for bounded temporal constraints (e.g., {\small{\once \timed[$\leq3$]}}). 
%
%
We also use \salt's scope operators to restrict when a requirement must hold, using {\small{\texttt{before}}}, {\small{\texttt{after}}}, or {\small{\texttt{between}}} events. Scope intervals must specify inclusion ({\small{\inclusive}}) or exclusion ({\small{\exclusive}}) of their delimiting events, as well as their occurrence mode: {\small{\required}} (must occur), {\small{\weak}} (may occur), or {\small{\optional}} (only evaluated if the event occurs).

\subsubsection{\fretish Scope Intervals}
\fret treats a scope as an interval between endpoints \leftend and \rightend, which depend on the scope option (see the functions called \emph{left(scope)} and \emph{right(scope)} in Table~\ref{table:allAlgTables}). The following abbreviations describe the endpoints: \FTP: first time point in execution, \FiM/\LiM: first/last state in mode, \FNiM/\LNiM: first/last state not in mode, \FFiM/\FLiM: first occurrence of \FiM/\LiM in execution.  
For example, \fretin scope is defined between points \FiM and \LiM. 

We reuse these endpoints but observe that some of them coincide in the context of PCTL$^*$ (see the function called \emph{endpoint(acronym)} in Table~\ref{table:allAlgTables}). PCTL$^*$  path formulas are expressed in future time LTL and thus look ahead from left to right on a trace. As a consequence, they can only detect the left endpoint of an interval one time point before it occurs. For instance, \FiM (and \FFiM) is detected when \saltnot(\mode) \saltand \saltnext \mode holds, which coincides with \LNiM. Similarly, \LiM (and \FLiM) is detected when \mode \saltand \saltnot(\saltnext \saltnot \mode), which coincides with \FNiM. As such, most scope intervals are open on the left and closed on the right. Since we reason over infinite traces, the final time point of a trace is substituted by \textsc{false} (e.g., see \emph{null} and \emph{after} scopes in \emph{right(scope)}), making these intervals open on the right. Another exception is when \leftend is \FTP. Since \FTP is a point where the formulas are interpreted, the corresponding scope intervals are inclusive of \FTP and thus closed on the left.

\subsection{Compositional Formalization Algorithm}
\label{sec:algorithm}

Algorithm~\ref{alg1} describes the generation of PCTL$^*$ formalizations.  Our algorithm starts by retrieving, from Table~\ref{table:allAlgTables}, the \tFormula{} associated with the timing field (line~\ref{line:tformula}). For instance, for requirement \textbf{[P-006]}, in which we use \textit{next} timing, we retrieve the \salt formula: next incursionDetected, which means that incursionDetected must hold at the next timepoint. 
In the next step (line~\ref{line:formulaltl}), we enforce \tFormula{} to hold only within the interval delimited by the scope's~\rightend. 
For example, for \emph{in} scope, \ltlform{} returns: {\small{(next \response) \before \inclusive \weak \rightend}} which ensures that next \response must occur either before or exactly at the right endpoint of the scope interval, if that endpoint occurs. The \weak operator ensures that the constraint remains in effect even if the delimiting event \LiM never occurs. Note that \ltlform{} only concerns the scope’s \rightend; checking that the requirement is enforced within the scope’s \leftend is handled separately by $\mathit{sform}$ (see Table~\ref{table:allAlgTables}).

\begin{algorithm}[!ht]
\parbox{\linewidth}{ 
\small
\begin{algorithmic}[1]
\Function{GenProbFormula}{$\mathit{scope},\mathit{cond},\mathit{prob},\mathit{timing}$}

\LeftComment{\emph{Timing formalization}}
   \State{\tFormula $\gets$ \tform $(\mathit{timing})$ \label{line:tformula}}
    \State{\ltlFormula$ \gets \SALT (\mathit{ltlform}(\mathit{scope}, \mathit{tFormula}))$ \label{line:formulaltl}}

\LeftComment {\emph{Probability formalization}}
\If{$\mathit{prob}$ = $\mathsf{null}$ \label{line:probform1}}
    \State{$\mathit{pFormula} \gets P_{\geq 1}[\mathit{ltlFormula}]$ \label{line:probform2}}
\ElsIf{$\mathit{prob}$ = bound \label{line:probform3}}
    \State{$\mathit{pFormula} \gets P_{\sim \bound}[\mathit{ltlFormula}]$\label{line:probform4}}
 \EndIf

\LeftComment {\emph{Condition formalization}}
\If{$\mathit{cond} = \mathsf{null}$} \label{line:conditionbeginning}
    \State{$\mathit{baseForm} \gets \saltap$}\label{line:baseline1} 
\ElsIf{$\mathit{cond} = \mathsf{holding}$}
    \State{$\mathit{trigger} \gets \condition$}
    \State{$\mathit{baseForm} \gets\saltalways (\mathit{trigger}\; \saltimplies\; \saltap)$}\label{line:baseline2}
\ElsIf{$\mathit{cond} = \mathsf{regular}$}
    \State{$\mathit{trigger} \gets ((\saltnot\; \condition)\; \saltand\; \saltnext\; \condition)$} 
    \State{$\mathit{f1} \gets (\saltalways\; (\mathit{trigger}\; \saltimplies\;  \saltnext\; \saltap))$}
    \State{$\mathit{f2} \gets (\condition\; \saltimplies\; \saltap)$}    
    \State{$\mathit{baseForm} \gets \mathit{f1}\; \saltand\; \mathit{f2}$}\label{line:baseline3}
\EndIf

\LeftComment {\emph{Scope formalization}}
\State{$\mathit{formula} \gets \mathit{sform}(\mathit{scope}, \mathit{baseForm})$ \label{line:scopetable}}
\State{$\mathit{formula} \gets \textsc{Instantiate}(\mathit{formula}, \leftend, \mathit{left}(\mathit{scope}))$}\label{line:substL}
\State{$\mathit{formula} \gets \textsc{Instantiate}(\mathit{formula}, \rightend, \mathit{right}(\mathit{scope}))$}\label{line:substR}
\For{$\mathit{ac}\in\{\FiM, \FFiM, \LNiM, \LiM, \FNiM, \FLiM \}$}\label{line:forLoop1}
   \State{$\mathit{formula} \gets \textsc{Instantiate}(\mathit{formula}, \mathit{ac}, \mathit{endpoint}(\mathit{ac}))$}\label{line:forLoop2}
\EndFor
\State{$\mathit{formula} \gets \SALT(\mathit{formula})$}\label{line:salt2}
\State{$\mathit{formula} \gets \textsc{Instantiate}(\mathit{formula}, \saltap, \mathit{pFormula})$}\label{line:probformsubst}
\Return{$P_{\geq 1} [\mathit{formula}]$}\label{line:return}
\EndFunction
\end{algorithmic}
} 
\caption{Generation of Probabilistic Formulas}
\label{alg1}
\end{algorithm}

\begin{table}
\caption{Functions used in Algorithm~\ref{alg1} adapted from~\cite{GIANNAKOPOULOU2021106590}.}
\label{table:allAlgTables}

\vspace*{-3mm}
\begin{center}
\def\tabcolsep{3pt}
{\renewcommand{\arraystretch}{0.78}
\scalebox{0.8}{
\begin{tabular}{|p{2.2cm}p{7cm}|}
\hline
$\mathit{timing}$
 & $\mathit{tform}(\mathit{timing})$ \\
\hline \hline
{\em immediately} 
& \response\\
{\em next} 
&  next \response\\
{\em always} 
& \saltalways \response \\
{\em eventually} 
&  \salteventually \response \\
{\em until}  
&  (\response \saltuntil \exclusive \weak \stopcond)\\
{\em before}  
&  \response \saltreleases (\saltnot \stopcond)\\
{\em for \durn} 
&  (\saltalways \timed[$\leq$\durn] \response)\\
{\em within \durn}
&  (\salteventually \timed[$\leq$\durn] \response)\\
{\em never}
& {\em always}(\saltnot \response)\\
{\em after \durn}
&  {\em for}(\dur, \saltnot \response) \saltand\ {\em within}(\durn$+1$, \response) \\
\hline
\end{tabular}}

\vspace*{1.5mm}
\scalebox{0.8}{
\begin{tabular}{|l l|}
\hline
 $\mathit{scope}$ 
 & $\mathit{ltlform}(\mathit{scope}, \mathit{tFormula})$ \\
\hline \hline
{\em null, after} 
& $\mathit{tFormula}$\\
{\em in, before, notIn} 
& ($\mathit{tFormula}$) \before \inclusive \weak \rightend \\
{\em onlyIn} 
& \saltnot ($\mathit{tFormula}$) \before \inclusive \weak \rightend\\
{\em onlyBefore} 
& \saltnot ($\mathit{tFormula}$)\\
{\em onlyAfter} 
& \saltnot ($\mathit{tFormula}$ \before \inclusive  \weak \rightend)\\
\hline
\end{tabular}}

\vspace*{1.5mm}
\scalebox{0.8}{
\begin{tabular}{|l l|}
\hline
$\mathit{scope}$
 & $\mathit{sform}(\mathit{scope}, \mathit{baseForm})$ \\
\hline \hline
{\em null}
& $\mathit{baseForm}$\\
{\em in}
& (\saltalways\\ &
\hspace{\parindent} \hspace{\parindent} \hspace{\parindent} \hspace{\parindent}($\mathit{baseForm}$ \between   
 \exclusive \optional \leftendnospace,  \\ 
& \hspace{\parindent} \hspace{\parindent} \hspace{\parindent} \hspace{\parindent}  \inclusive \weak \rightendnospace))\\
& \saltand \\
&  \hspace{\parindent} \hspace{\parindent} \hspace{\parindent} \hspace{\parindent}	(\mode \saltimplies ($\mathit{baseForm}$ \\ & \hspace{\parindent} \hspace{\parindent} \hspace{\parindent} \hspace{\parindent}	 \before \inclusive \weak \rightendnospace))\\ 
{\em before} 
& \mode \saltor ($\mathit{baseForm}$ \before \inclusive \weak \rightendnospace) \\
{\em after} 
& ($\mathit{baseForm}$ \after  \exclusive \optional \leftend) \\
{\em notIn} 
& {\em{in}}($\mathit{baseForm}$)\\
{\em onlyIn}  
& {\em{in}}(\saltnot($\mathit{baseForm}$))\\
{\em onlyBefore} 
& ((\saltnot \mode) \saltimplies ({\em{after}}(\saltnot($\mathit{baseForm}$)))) \\
 & \hspace{\parindent} \hspace{\parindent} \hspace{\parindent} \hspace{\parindent} \saltand  (\mode \saltimplies (\saltnot($\mathit{baseForm}$))) \\
 {\em onlyAfter} 
& (\saltnot($\mathit{baseForm}$) \before \inclusive \weak \rightendnospace) \\
\hline
\end{tabular}}

\vspace*{1.5mm}
\scalebox{0.8}{
\begin{tabular}{|p{3cm}p{3cm}p{3cm}|}
\hline
$\mathit{scope}$ & $\mathit{left}(\mathit{scope})$ & $\mathit{right}(\mathit{scope})$ \\
\hline \hline
{\em null} & \FTP & \textsc{false} \\
{\em before} & \FTP & \FFiM \\
{\em after} & \FLiM & \textsc{false} \\
{\em in} & \FiM & \LiM \\
{\em notIn, onlyIn} & \FNiM & \LNiM \\
{\em onlyBefore} & \FFiM & \textsc{false} \\
{\em onlyAfter} & \FTP & \FLiM \\
\hline
\end{tabular}}

\vspace*{1.5mm}
\scalebox{0.8}{
\begin{tabular}{|p{4cm}p{6cm}|}
\hline
 $\mathit{acronym}$ 
 & $\mathit{endpoint}(\mathit{acronym})$ \\
\hline \hline
{\em \FiM, \FFiM, \LNiM } 
& (\saltnot \mode) \saltand (\saltnext \mode)\\
{\em \LiM, \FNiM, \FLiM } 
& (\mode) \saltand (\saltnext \saltnot \mode)\\
\hline
\end{tabular}}
}
\end{center}
\vspace{-3mm}
\end{table}

Next, we invoke \salt, providing the \tFormula{} along with the scope option as input, and receive an LTL formula as output. For \textbf{[P-006]}, \salt returns {\small{\texttt{(X incursionDetected) \& (!RIGHT)}}}, meaning that incursionDetected must hold at the next time point, unless the current time point is the last one (\rightend) within the scope interval.

On lines \ref{line:probform1}--\ref{line:probform4}, \pFormula{} is formed by taking the probability field into consideration. When the probability field is null, \pFormula{} equals \ltlFormula{} inside the PCTL$^*$ \emph{P} operator with a probability greater than or equal to $1$. When the probability field is not null and thus there is a specified probabilistic \bound, \pFormula{} is captured by applying the probabilistic \emph{P} with the predefined \bound operator on \ltlFormula{}. For requirement \textbf{[P-006]}, line~\ref{line:probform4} returns {\small{$P_{\geq 0.99}$[(X RES) \& (!RIGHT)]}}.

Next, \baseForm{} is constructed by considering the condition field (lines~\ref{line:conditionbeginning}--\ref{line:baseline3}). If the condition field is null, \baseForm{} is set to \saltap, a placeholder that is later instantiated to \pFormula{} (line~\ref{line:probformsubst}). 
Conditions can be interpreted in two ways 
depending on whether the condition field is set to \emph{holding} or \emph{regular}.
If the condition value is \emph{holding}, \baseForm{} enforces the formula at any point where a \trigger{} is detected, where \trigger{} corresponds to the condition (\condition) \emph{being} true. For {\textbf{[P-006]}, line~\ref{line:baseline2} returns {\small{\always (incursion \saltimplies\ \saltap)}}.
If the condition value is \emph{regular}, \baseForm{} is defined as a conjunction: the first conjunct imposes the formula whenever a \trigger{} is detected (when a condition \emph{becomes} true, i.e., transitions from false to true, as described by {\small{(\saltnot \condition) \saltand \saltnext \condition)}}, and the second conjunct covers the case where the \trigger{} occurs at the beginning of the interval.

Continuing, the \sform{} function uses \salt scope operators to enforce \baseForm{} over all intervals that are relevant to the specified scope field (line~\ref{line:scopetable}). Note that when \leftend is \FTP, there is no need to constrain the application of \leftend within \sform{} (e.g., scope \fretbefore). For scope \fretin, \baseForm{} is expressed as a conjunction. The first conjunct states that at any point in an execution trace (\saltalways), \baseForm{} must hold between \leftend and \rightend, thus imposing \baseForm{} on all relevant intervals in the execution. The second conjunct applies to cases where \modevariable holds at the beginning of the trace, i.e., the first relevant interval starts at \FTP; it states that in this case, \baseForm{} must hold until \rightend. This special case is required because the interval starting at \FTP is not open on the left, as explained earlier. For \textbf{[P-006]}, line~\ref{line:scopetable} returns: {\small{(\always (\always (incursion \saltimplies\ \saltap) \between \exclusive \optional \leftend, \inclusive \weak \rightend)) \saltand (auto\_takeoff \saltimplies (\saltnot (\always (incursion \saltimplies\ \saltap)) \before \inclusive \weak \rightend))}}.

Next, endpoints \leftend and \rightend are substituted (lines~\ref{line:substL}–\ref{line:forLoop2}) with the appropriate expressions for the given scope, as specified in Table~\ref{table:allAlgTables}. For \textbf{[P-006]}, \leftend} (\FiM) is replaced with 
{\small{(\saltnot auto\_takeoff \saltand (\saltnext auto\_takeoff))}}, and \rightend (\LiM) is replaced with 
{(\small{auto\_takeoff \saltand (\saltnext \saltnot auto\_takeoff))}}.
On line~\ref{line:salt2}, \salt is invoked to generate the LTL expression. Note that the placeholder \saltap is not replaced by the probabilistic \pFormula{} until line~\ref{line:probformsubst}. The algorithm wraps the resulting formula in the PCTL$^*$  \texttt{P} operator and returns it on line~\ref{line:return} (see \textbf{[P-006]} in Table~\ref{table:autonomy_examples_filtered}).

\subsection{Algorithm Correctness: Termination and Well-Formedness}



\begin{thm}
    Procedure \textsc{Create\_Probabilistic\_Formula} (Algorithm~\ref{alg1}) terminates and returns a well-formed PCTL* formula (syntax defined in Section~\ref{sec:preliminaries_pctl_salt}) for any valid combination of input fields scope, condition, probability and timing.\end{thm}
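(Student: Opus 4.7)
The plan is to treat the two claims separately: first dispatch termination by a straightforward structural argument, then establish well-formedness by case analysis on the four input fields, exploiting the compositional structure of Algorithm~\ref{alg1}.

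For termination, I would observe that Algorithm~\ref{alg1} is essentially straight-line code: the only loop (lines~\ref{line:forLoop1}--\ref{line:forLoop2}) iterates over a fixed six-element set of acronyms, and every other construct is a finite conditional. Termination thus reduces to termination of the auxiliary procedures invoked: the table lookups $\mathit{tform}$, $\mathit{ltlform}$, $\mathit{sform}$, $\mathit{left}$, $\mathit{right}$, $\mathit{endpoint}$ (all finite enumerations in Table~\ref{table:allAlgTables}); the syntactic substitution \textsc{Instantiate}; and the \salt translator. The first two are obviously finite; \textsc{Instantiate} is a single pass over a syntax tree of bounded size; and the termination of \salt's translation into LTL is guaranteed by construction in~\cite{SALT}. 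Since only a constant number of such calls are performed, the procedure halts on every valid input.

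For well-formedness, the plan is to trace the type of each intermediate variable through the algorithm and verify that, at the point it is embedded into its enclosing expression, it has the correct syntactic category in the PCTL* grammar of Section~\ref{sec:preliminaries_pctl_salt}. Concretely, I would prove four intermediate claims by case analysis on the field values: (i) for every value of $\mathit{timing}$, the result of $\mathit{tform}(\mathit{timing})$ is a well-formed \salt path expression over \response and \stopcond; (ii) for every value of $\mathit{scope}$, $\mathit{ltlform}(\mathit{scope},\cdot)$ preserves this property, so that the call to \salt on line~\ref{line:formulaltl} produces a well-formed LTL path formula \ltlFormula, which belongs to the PCTL* path-formula fragment; (iii) wrapping \ltlFormula in $P_{\geq 1}[\cdot]$ or $P_{\sim \mathit{bound}}[\cdot]$ on lines~\ref{line:probform2} and~\ref{line:probform4} yields a well-formed PCTL* \emph{state} formula \pFormula by the production $\Phi ::= P_{\sim p}[\Psi]$; and (iv) in the condition branch, \baseForm{} is built from the placeholder \saltap and the Boolean expression \condition using only \salt's propositional and temporal operators, whose targets after \salt translation are again LTL path formulas. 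After scope wrapping by \sform{}, endpoint instantiation, and the second \salt call on line~\ref{line:salt2}, the resulting \formula is a well-formed LTL formula in which \saltap appears positively as an atomic placeholder.

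The final delicate step, which I expect to be the main obstacle, is line~\ref{line:probformsubst}: substituting the PCTL* state formula \pFormula for the atom \saltap inside an LTL formula. This is well-defined for PCTL*, because the path-formula production $\Psi ::= \Phi$ permits any state formula to appear wherever an atomic proposition is expected; nevertheless, one must check that no case of the \tFormula/\ltlFormula/\sform/\baseForm construction places \saltap in a position that \salt would have rewritten (e.g., beneath a timed operator in a way that duplicates or negates a probabilistic subformula in a semantically unsound manner). I would verify this by inspecting Table~\ref{table:allAlgTables} line by line: in each case \saltap appears under a finite prefix of \saltalways, \saltnext, \saltand, \saltimplies, \saltnot, \between, \before, \after, or \saltuntil, and such contexts are closed under substitution of arbitrary PCTL* state formulas. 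Once this is established, the outer wrapping $P_{\geq 1}[\,\cdot\,]$ on line~\ref{line:return} produces a well-formed PCTL* state formula by the same production used for $P_{\sim \mathit{bound}}$, completing the proof.
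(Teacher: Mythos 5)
Your proposal is correct and follows essentially the same route as the paper's proof: termination via the observation that the algorithm is straight-line code with one fixed six-iteration loop and terminating auxiliary calls, and well-formedness via case analysis over the finite tables, treating the placeholders as atomic propositions, invoking the fact that LTL is a syntactic subset of the PCTL* path-formula fragment, and justifying the substitution of the probabilistic state formula for the placeholder before the final $P_{\geq 1}$ wrapping. Your extra check that no table entry places the placeholder in a context that the \salt{} rewriting could corrupt is a slightly more explicit treatment of the step the paper handles by simply asserting that the line-24 output conforms to its extended grammar $\varphi'$, but it is the same argument.
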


\begin{proof}
We prove the correctness of \textsc{Create\_Probabilisti{\-}c\_Formula} by direct proof of the steps in Algorithm~\ref{alg1}. \smallskip

\noindent\emph{\textbf{Timing (lines 2--3):}} We begin by showing that \ltlFormula{} is a valid LTL formula. For this, it is sufficient to show that the input provided to the \SALT{} function (line 3) is a valid \SALT{} formula. This function invokes the \SALT{} tool~\cite{SALT} which produces syntactically correct LTL formulas.  
We systematically construct the input SALT formula via the composition of \tform{} and~\ltlform{}. Since both functions (Table~\ref{table:allAlgTables}) are finite, we  exhaustively enumerate all possible combinations. This allows us to verify, by construction, that the output of each composition yields a syntactically valid \SALT{} formula. On line~\ref{line:tformula}, all possible values of \tform{} 
are valid \SALT{} formulas. In Table~\ref{table:allAlgTables} (and the generated \SALT{} formulas), \rightend, \response and \stopcond are treated as atomic propositions that are later instantiated. 
We next consider the \ltlform{} output (line~\ref{line:formulaltl}) from Table~\ref{table:allAlgTables}. The first row outputs the valid \SALT{} formula, \tFormula{}. This is valid as it is the output from line 2 (discussed above). The subsequent rows extend this using other \SALT{} constructs such as \texttt{\small{before inclusive weak RIGHT}} in the second row. Each of these produce a valid \SALT{} formula. \smallskip  

\noindent\emph{\textbf{Probability (lines 4--7):}} Based on the probability input, either line~\ref{line:probform2} or \ref{line:probform4} produces a syntactically valid PCTL$^*$  formula (\probForm), of the form $P_{\sim p}[\varphi]$, where $\varphi$ is a well-formed LTL formula.\footnote{We remind the reader of the LTL syntax~\cite{pnueli1977temporal}: \( \varphi ::= true \mid a \mid \varphi_1 \wedge \varphi_2 \mid \neg \varphi_1 \mid X \varphi_1 \mid \varphi_1 U \varphi_2 \mid \varphi_1 U^{\leq n} \varphi_2 \), where $\varphi_{1}$ and $\varphi_{2}$ are LTL formulas.} 
The correctness of this construction follows from the observation that the probabilistic operator $P$ in PCTL$^*$  is defined over a path formula $\Psi$, and that every LTL formula $\varphi$ is syntactically a path formula---i.e., the LTL syntax is a strict syntactic subset of the path formula fragment $\Psi$. \smallskip

\noindent\emph{\textbf{Condition (lines 8--17):}} Depending on the condition input, \baseForm{} is assigned one of three valid \SALT{} formulas from lines \ref{line:baseline1}, \ref{line:baseline2} or \ref{line:baseline3}. We treat \texttt{COND} and \texttt{PROBFORM} as atomic propositions that are instantiated later. \smallskip

\noindent\textbf{\emph{Scope (lines 18--25):}} 
The function, \sform{}, (line 18) first returns a fragment of a \SALT{} formula. Then, \INSTANTIATE{} replaces the keywords \LEFT{} (line 19), \RIGHT{} (line 20) and any abbreviations in \formula{} for \SALT{} fragments (lines 21--22). 
This results in a correct by construction \SALT{} formula and can be proved similar to \textit{Timing} on lines \ref{line:tformula}--\ref{line:formulaltl}. For brevity, we refer the reader to the compositional algorithm for classic FRET~\cite{GIANNAKOPOULOU2021106590} for detailed explanations of the acronyms in Table \ref{table:allAlgTables}. On line~\ref{line:salt2}, \formula{} is replaced by an LTL formula after calling \SALT. Notice that this LTL formula contains \saltap (from lines 9, 12 or 15) treated as an atomic proposition. Hence, \formula{} on line \ref{line:probformsubst} complies with the syntax: {\small{\( \varphi' ::= true \mid a \mid P_{\sim \bound}[\text{\formulaltlnospace}] \mid \varphi_1 \wedge \varphi_2 \mid \neg \varphi_1 \mid X \varphi_1 \mid \varphi_1 U \varphi_2 \mid \varphi_1 U^{\leq n} \varphi_2 \)}}. 
To ensure that any generated formula conforms to the syntax of a valid PCTL$^*$  state formula, line \ref{line:return} encapsulates it within the probabilistic operator \textit{P}, resulting in a \formula{} that adheres to the PCTL$^*$  strict syntactic subset:
\centerline{$ \Phi_{\text{\formula{}}} ::= P_{\geq 1 }[\varphi']$}\smallskip
 
\noindent\textbf{\emph{Termination:}} All operations are deterministic and rely on finite derivations from the tables, conditional branches, and variable replacements. There is no recursion and our only loop on line 21 has exactly 6 iterations for the set of acronyms. Hence, every control path leads to a return statement, and assuming that the \SALT{} translation function terminates (see \SALT{} semantics in~\cite{SALT}), Algorithm~\ref{alg1} is guaranteed to terminate.\smallskip

By analyzing the formula construction steps, along with a correctness-by-construction argument for the intermediate formulas derived from Table \ref{table:allAlgTables} and the \texttt{SALT} translation function, we have shown that the final output from Algorithm~\ref{alg1} is guaranteed to be a well-formed PCTL$^*$ formula.

\end{proof}
\section{Validation of Formalization Semantics}
\label{sec:testing}

\new{As previously stated, we adopted PCTL$^*$ because it provides a formal syntax and semantics for the specification of probabilistic requirements, such as ``the probability of the system detecting a violation correctly (from a given initial system state) is at least 96\%'' ($P_{\ge 0.96}[F\ violation_\_detected]$). The automated verification of such properties is supported by probabilistic verification tools such as PRISM~\cite{PrismPropSpec} and Storm~\cite{dehnert2017storm}.}

To enhance confidence that the generated PCTL$^*$ formulas capture the intended semantics, we employ a dedicated validation framework. For each template key and its corresponding PCTL$^*$  formula, the framework checks that the formula conforms to the template key's semantics.  As part of the validation process, we generate Discrete-Time Markov Chains (DTMCs) to serve as test models against the generated formulas. 

\new{Our validation approach is inspired by established software-engineering techniques, such as oracle-based and coverage-driven testing. To ensure comprehensive coverage, we generate test cases for all $560$ \fret template-keys, exercising multiple field valuations for each. Our approach draws upon validation approaches used for Prospec~\cite{Prospec} and classic FRET \cite{GIANNAKOPOULOU2021106590} (both for LTL), which rely on trace generation and Spin/NuSMV model checking. However, our approach introduces a critical extension: we construct DTMC models and employ PRISM for probabilistic model checking, thereby bridging established validation techniques with the probabilistic domain. While non-exhaustive, our method systematically explores the template-key space and allows us to reuse the classic FRET oracles to ensure semantic consistency with the classic FRET's LTL-based definitions.}

We begin with a brief overview of Discrete-Time Markov Chains (DTMCs) before detailing our validation approach.

\subsection{Preliminaries: Discrete-Time Markov Chains (DTMCs)}
A Discrete-Time Markov Chain (DTMC) describes a system that transitions between a finite countable set of states in discrete time steps~\cite{kwiatkowska2007stochastic}. 
A DTMC is formally defined as a tuple $\langle S, \overline{s}, P, L \rangle$, where $S$ is a finite set of states, $\overline{s} \in S$ is the initial state, $P: S \times S \rightarrow [0,1]$ is a transition probability matrix, and $L: S \rightarrow 2^{AP}$ is a function that labels states with atomic propositions in $AP$. Note that $\Sigma_{s'\in S} P(s,s') = 1$ for all $s \in S$. If the DTMC is currently in state $s$, then
$P(s, s')$ defines the probability of transitioning to $s'$ within one time step. A path through a DTMC is a finite or infinite sequence of states that
describes a possible execution of the modeled system.

\subsection{Automated Validation Framework}
We describe the three main components of the framework.
The \textsf{\small{Probabilistic Model Generator}} (Figure \ref{fig:architecture}) component uses a random approach for the generation of DTMC models; an example is shown in Figure~\ref{fig:dtmcTest1}.
Each of the states $s0$--$s9$ of the model is defined by the state variable tuple ($t$, $m$, $c$, $sc$, $r$) representing time, \mode, \condition, \stopcond and \response, respectively. Time~$t\in [0,\hbox{\textsl{Max}}]$, $\hbox{\textsl{Max}}\in\mathbb{N}$ counts the number of steps needed to reach a state s$\in$\{s0, s1, \ldots, s9\}. For example, 5 steps are needed to reach state s6, hence $t=5$ in s6. 
The rest of the state variables are Booleans $m,c,sc,r\in [0,1]$.
E.g., in state s9, \condition and \stopcond are 0, while \mode 
and \response are 1.

In order to generate DTMCs, our tool first randomly chooses a number for $\hbox{\textsl{Max}}$. For illustration, in our example $\hbox{\textsl{Max}}=6$. Second, it randomly selects between 0 and 3  (random, disjoint, non-consecutive) intervals ranging from 0 to $\hbox{\textsl{Max}}$, for each \mode, \condition, \stopcond and \response.   In our example, two intervals were constructed for \mode, i.e., intervals [0,1] (states $s0-s1$) and [3,6] (states $s3-s9$), a single interval for \condition, i.e., [3,5]) (states  $s3-s7$), and zero intervals were constructed for \stopcond. 
The construction of \response intervals takes a different approach; we select random intervals as before, but since responses may be probabilistic, we assign $r$ to be 0 and 1 in separate branching states. In the example, the interval [4,6] was chosen, from which probabilistic transitions, from s3 to $s4-s5$, from s4 to $s6-s7$, and so on, were generated.
Finally, our tool generates a random transition probability matrix. 

\begin{figure} [t]
\centering
\includegraphics[width=0.8\textwidth]{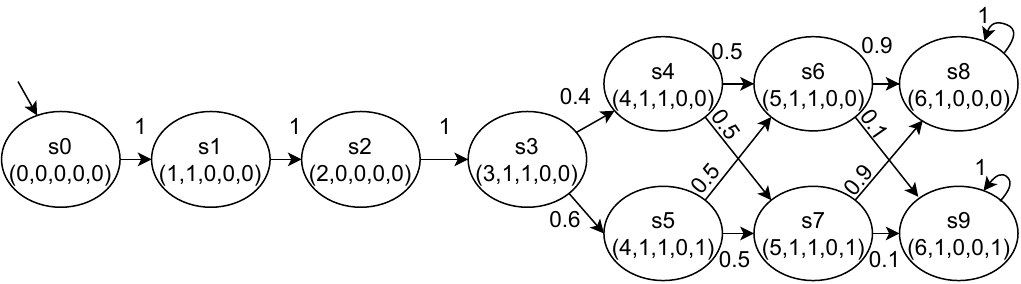}
\caption{\label{fig:dtmcTest1} \new{An example of a generated DTMC used for validation. States are defined by a 5-tuple ($t,m,c,sc,r$) representing time, \mode, \condition, \stopcond and \response.  The first element, $t$, tracks time by counting transitions from the initial state. The other four elements are boolean indicators for \mode ($m$), \condition ($c$), \stopcond ($sc$), and \response ($r$), where 1 signifies that the proposition is true in that state and 0 signifies it is false.}}
\end{figure}

The \textsf{\small{Probabilistic Oracle}} component interprets the semantics of a template key on a DTMC and produces an expected value which can be either \texttt{true} or \texttt{false}. Our DTMCs contain no loops (except for the self-loops in the final states) and thus, each model represents a finite number of paths. 

Our approach is as follows. For each DTMC:
\begin{enumerate}
    \item Enumerate all possible paths.
    \item Apply existing\footnote{We have also created a new non-probabilistic oracle for holding conditions.} classic FRET oracles~\cite{GIANNAKOPOULOU2021106590} for the template key without the probability field. 
    If the oracle returns \texttt{true}, retain the path; otherwise discard it. 
    \item For the resulting subset, calculate the sub-path probabilities of each retained path. The probability of a sub-path is the product of the transition probabilities of the sub-path. 
\end{enumerate}

The probabilistic oracle is shown in Algorithm~\ref{alg:prob-oracle-code}, which calculates the probability of a sub-path through the retained paths in step 3 above in a compositional manner. 
It computes sub-paths that represent the sequence of states starting at a state where the mode and the condition (if any) hold, up to the conclusion of the response. The probability of each sub-path must satisfy the requirement's probability constraint. 

\begin{algorithm}[t]
\parbox{\linewidth}{ 
\small
\begin{algorithmic}[1]
\Function{CheckProbability}{$\mathit{dtmc}, \mathit{key}, \mathit{probBound}$}
    \State $\mathit{badPaths} \gets \emptyset$; $\mathit{goodPaths} \gets \emptyset$
    
    \ForAll{$\mathit{path} \in \mathit{dtmc.paths.filter(\lambda p.}$$\mathsf{FREToracle}(\mathit{p}, \mathit{key})$}
        \ForAll{$l \in \mathsf{starts}(\mathit{path}, \mathit{key.scope}, \mathit{key.condition})$}
            \State $e \gets \langle \text{Table3} \rangle (\mathit{path}, l, \mathit{key.timing})$
            \State $e \gets \mathsf{getRightIndex}(\mathit{path}, l, \mathit{key}.\mathit{timing})$ \Comment{See Table~\ref{tab:rightindex}}
            \State $p \gets \mathsf{reduce}(*, [l+1 \ldots e].\mathsf{map}(\mathit{dtmc.prob}))$
            \If{$\mathit{probBound}(p)$}
                \State $\mathit{goodPaths} \gets \mathit{goodPaths} \cup \{\mathit{path}\}$
            \Else
                \State $\mathit{badPaths} \gets \mathit{badPaths} \cup \{\mathit{path}\}$
            \EndIf
        \EndFor
    \EndFor
    \Return $[\mathit{goodPaths}, \mathit{badPaths}]$
\EndFunction
\end{algorithmic}
} 
\caption{Probabilistic Oracle}
\label{alg:prob-oracle-code}
\end{algorithm}

For a retained path through the DTMC, let $s$ be its sequence of states $s_i$  for $0 \leq i < length(s)$. Let $last = length(s) - 1$ and let $prob_i$ be the probability of taking the transition from state $s_{i-1}$ to state $s_i$, where $prob_0 = 1$. The probability of a sub-path is thus $\prod_{i = l+1}^{e} prob_i$ (line~6), where $l$ is the left \new{(starting)} index and $e$ is its right \new{(ending)} index of the sub-path . 

The starting index, $l$, is determined by the type of condition and mode.
When there is no condition or mode, $l$ is 0.
When, for example, there is a holding condition $c$ and an in mode $m$, there may be several sub-paths, each starting at an index $i$ where $s_i\models c \wedge m$. In the case of a regular condition $c$ and an in mode $m$, there may be several sub-paths, each starting at index $i$ where $s_i\models c \wedge m$ and either $i = 0$ or  $s_{i-1}\models \neg c$. Sub-paths for different combinations of modes and conditions are similarly defined. 
\new{The function \textsf{starts} (line~4)
in the listing computes the set of left indices for a given path.}


\new{The ending index $e$ is determined by the starting index $l$ and the timing constraint type. We write $r$ for the response, $t$ for the time bound in the \textit{within} and \textit{for} timings, and $sc$ for the stop condition in the \textit{before} and \textit{until} timings.
The function \textsf{getRightIndex} (line~6) takes as input the path, the starting index $l$, and the timing constraint from $\mathit{key.timing}$. It uses the timing constraint to look up the corresponding row in Table~\ref{tab:rightindex}, then applies the formula from that row to compute the right index $e$. For example, if the timing is ``next $r$'', the function looks up the second row of the table and computes $e = \min(last, l+1)$. For more complex timings like ``eventually $r$'', the formula determines $e$ as the first index where $r$ holds, ensuring no earlier state in the sub-path satisfies $r$.}


\begin{table}[tb]
\begin{center}
\caption{\new{Computation of right sub-path index. The function $\mathsf{getRightIndex}(\mathit{path}, l, \mathit{timing})$ looks up the appropriate formula based on the timing type and computes the right index, $e$. The first column shows the different timing options for \fret requirements. The second column provides the formula to compute the right index $e$ of a sub-path starting at left index $l$ in $\mathit{path}$. Here, $last$ denotes the final index of $\mathit{path}$, $s_i$ denotes the state at index $i$ in $\mathit{path}$, $r$ is the response proposition, $sc$ is the stop condition, and $t$ is a time bound.}}
\label{tab:rightindex}
\scalebox{0.9}{
\begin{tabular}{ll}\toprule
\new{timing} & right index (denoted by $e$)\\ \midrule
immediately $r$ & $e = l$\\ 
next $r$ & $e = \min(last, l+1)$\\ 
always $r$ & $e = last$\\ 
never $r$ & $e = last$\\ 
eventually $r$ & $l\leq e \land (s_e \models r) \land \forall k\, .\, l\leq k < e \Rightarrow s_k \models \neg r$\\ 
until $sc$ $r$ & $l\leq e\land 
(((\forall k\,.\, l\leq k \leq last\Rightarrow s_k\models\neg sc)
\land e = last)$\\
 &$\lor ( (s_{e+1}\models sc)\land \forall k\,.\, l\leq k \leq e\Rightarrow s_k\models\neg sc))$\\ 
 before $sc$ $r$ & $l\leq e \land (s_e\models r) \land \forall k\, .\, l\leq k < e \Rightarrow s_k \models \neg r$\\ 
for $t$ $r$& $e = \min(last, l+t)$\\ 
within $t$ $r$ & $l\leq e\leq l+t\land (s_e\models r) \land \forall k\, .\, l\leq k < e\Rightarrow s_k\models\neg r$\\ 
after $t$ $r$ & $e = \min(last, l+t+1)$\\ 
\bottomrule
\end{tabular}}
\end{center}
\end{table}

For the  \emph{until} constraint  
the first disjunct deals with the case that the stop condition $sc$ never holds; hence the response $r$ must hold from index $l$ to the end of the path $last$. 
The second disjunct defines $e$ as the last time the response $r$ must hold, which is right before the stop condition $sc$ holds. 

The \textsf{\small{Probabilistic Semantics Evaluator}} 
 uses the \prism model checker. It receives as input a DTMC model, a PCTL* formula, as well as the expected value provided by the \textsf{\small{Probabilistic Oracle}}.  It encodes the DTMC model in the \prism language and evaluates the truth value of the formula. Finally, it checks if the truth value that is returned by \prism agrees with the expected value produced by the \textsf{\small{Probabilistic Oracle}}.

 \medskip\noindent\textbf{Validation highlight.}
Despite our experience with formal logics, our validation framework proved essential for catching errors in the formulas that were generated by our algorithm. This highlights the fact that authoring formulas that are intended for direct use by analysis tools should not solely rely on  manual effort. 

One subtle error that it uncovered involved the initial version of the \salt formulas that was returned by the \ltlform{} function, where we incorrectly used the \optional modifier in place of the intended \weak modifier.  E.g., \ltlform{}  originally returned  {\small{($\mathit{tFormula}$) \before \inclusive \optional \rightend}} for \fretin scope.

At first glance, this formulation may appear plausible, as both \optional and \weak suggest that the right endpoint of the scope interval might not be reached. However, their semantics differ in subtle yet crucial ways. The \optional modifier introduces a conditional constraint: the specified behavior must hold if the endpoint occurs, but no constraint is imposed if it does not. In contrast, the \weak modifier requires the property to hold regardless of whether the endpoint is eventually reached. For the \fretin scope, which permits the interval to never end, \optional imposes an insufficient constraint, resulting in an under-approximation of the intended behavior.

This discrepancy could be easily missed during manual inspection, as the formulas are structurally similar. Our validation framework allowed us to identify and correct the issue.

\section{Evaluation}
\label{sec:evaluation}


The extensions that we have presented in this work were motivated by requirements from publicly-available industrial case studies including reliability requirements in firefighting scenarios~\cite{pressburger2023wildfire} and autonomous system requirements~\cite{benz2024troupe,asaadi2020assured}. Next, we evaluate how \textit{effective} \fretish is in the specification of requirements \emph{in-the-wild} by investigating two research questions (inspired by~\cite{menghi2019specification}). We also assess the \textit{practical utility} of \fretish template keys in a third research question.
\begin{itemize}
    \item[\textbf{RQ1}:] To what degree can probabilistic requirements from the literature be expressed in extended \fretish?
    \item[\textbf{RQ2}:] Can extended \fret be used to specify requirements in a complex, industrial scenario?
    \item[\textbf{RQ3}:] Which template keys are most commonly used in the requirements gathered for \textbf{RQ1} and \textbf{RQ2}?
\end{itemize}

\noindent Note that we do not focus on the execution time of our tool, as the translation is inherently \textit{fast}, typically completely within milliseconds, due to its cache-retrieving-based mechanism. \medskip

\noindent\textbf{Experimental Setup.}
To answer each of the above research questions, we curated four sets of requirements as follows: \begin{compactitem}
    \item[\textsc{Set1}:] This requirement set was obtained via a detailed survey of multiple industrial real-world requirements from diverse transport sectors and defence \cite{grunske2008specification}. We acquired this set directly from the authors of \cite{grunske2008specification} (133 requirements).
    \item[\textsc{Set2}:] We conducted an extensive literature review of the 239 peer-reviewed papers that cited \cite{grunske2008specification}, according to Google Scholar\footnote{\url{https://scholar.google.com/}}, extracting 199 natural language and formalized probabilistic requirements.
    \item[\textsc{Set3}:] We examined the 38 PCTL* properties that are publicly available in the Quantitative Verification Benchmark Set (QVBS)~\cite{hartmanns2019quantitative} which  provides an extensive repository of formally specified requirements, associated probabilistic models, and verification results across various tools.  
    \item[\textsc{Set4}:] We elicited a suite of 21 requirements in collaboration with our industry partner, RTRC. These requirements were used for verification against a probabilistic model of their autonomous aircraft taxi, take-off and landing system. 
\end{compactitem}
 These sets comprise both natural-language and formalized requirements from the literature, public repositories and industrial use cases. They span multiple domains, including biological processes and medical tasks, software applications and cyber physical systems. We focus on requirements that are expressible in PCTL$^*$, the target logic of our work. Consequently, we exclude properties that rely on constructs from other logics, such as CSL's steady-state operator or RPCTL's rewards. 

\subsection{Expressing Probabilistic Requirements In-The-Wild (RQ1)}

\textbf{RQ1} investigates the extent to which our \fretish extension captures existing publicly available probabilistic requirements. To answer this, we examine \textsc{Set1}, \textsc{Set2} and \textsc{Set3}.

\textsc{Set1:} Of the 133 PCTL* requirements from \cite{grunske2008specification}, our probabilistic extension to \fret can express 108. The remaining 25 are unsupported due to the following: 
\new{
\begin{itemize}
    \item 22 use the bounded until operator or time intervals other than~[0,N], which classical \fret (and hence our probabilistic extension) does not support. These limitations are inherited in our extended \fret from classic \fret.
    \item two involve nested probabilities, which are not yet handled by the extension.
    \item one uses a probability structure ($P_{\geq 1}$ inside a $P_{\sim \text{bound}}$), referred to here as inverted probabilities, which also falls outside of the supported syntax.
\end{itemize}  
}

\textsc{Set2:} We expressed 175 of the 199 requirements from papers citing~\cite{grunske2008specification}. The 24 unsupported requirements include: bounded until (12), nested probabilities (2), time intervals other than~[0,N] (5), nested temporal operators (2) and inverted probabilities (3). 

\textsc{Set3:} We expressed 30 of the 38 QVBS requirements. The 8 unsupported requirements involved bounded until (7) or nested temporal operators (1), not supported in classical FRET.

To answer \textbf{RQ1}, probabilistic \fretish enabled us to express the vast majority (334/391, $\sim{}85\%$) of existing publicly available requirements. Notably, most of the unsupported cases (44 out of 57) involved bounded or nested temporal operators --- limitations inherited from classic \fretish. \new{These include several requirements that use time-bounded intervals other than [0\ldots N] for some $N\in\mathbb{N}$. For example, consider the following requirement from \textsc{Set1}: ``\textit{The probability of the queueing network becoming full between 0.5 and 2 time units is less than 0.1''}. This requirement would be formalised in PCTL* as $P<0.1[F[0.5, 2]\: \mathit{full}]$, but classical FRET does not support temporal bounds like $[0.5, 2]$ and this is inherited by our probabilistic extension.} Addressing these gaps is planned for future work.

\subsection{Probabilistic Requirements in an Industrial Scenario (RQ2)}
\label{sec:industrialcs}

To answer \textbf{RQ2}, we evaluated our extension on an industrial case study from the aviation domain, provided by RTRC. The case study focuses on a perception system \new{that produces a single sensor output based on inputs from} traditional and Learning-Enabled Component (LEC) sensors, used collaboratively across various aircraft maneuvers such as autonomous taxiing, take-off, and landing.



\begin{figure}[t]
    \centering
    \includegraphics[width=0.75\linewidth]{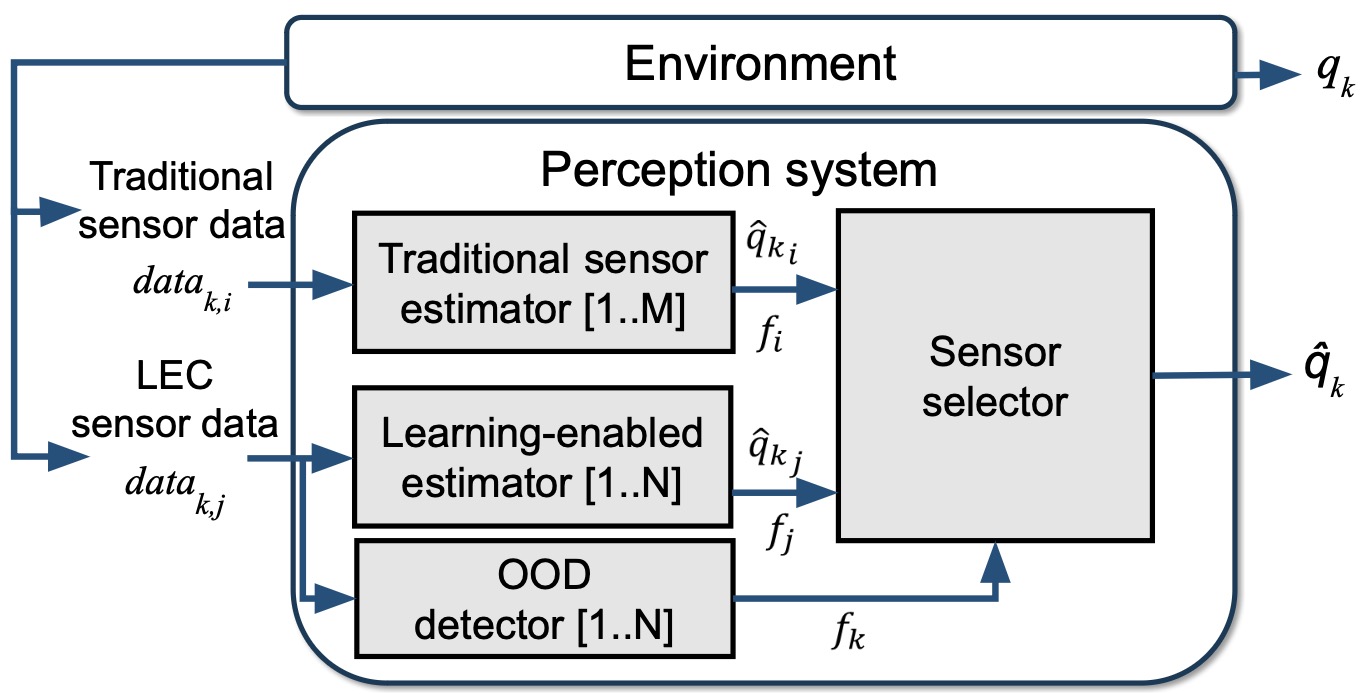}
      \caption{\new{Simplified RTRC-provisioned perception system overview. This system generates an estimate, $\hat{q}$, of $q$ at time $k$. Multiple sensors (traditional and LEC) collect data that are then processed by a series of estimators. The LEC components can be enhanced with out-of-distribution (OOD) detectors, which set a flag $f_k$ when an OOD value is predicted. A sensor selector combines these estimates to produce a single estimate, $\hat{q}_k$}.}
      \label{fig:simple_perception}
\end{figure}

Figure~\ref{fig:simple_perception} shows a simplified perception system architecture provided by RTRC. 
To estimate the value of the environment parameter $k$, the system processes various sensor inputs (e.g. camera images) and outputs a collection of parameter values $\hat{q}_k$. 
This passes through a series of (M+N) \textit{estimation pipelines} processing the raw incoming data (e.g., convolutional neural network (CNN) components for a LEC sensors). Each outputs an estimate $\hat{q}_{k,i|j}$ and has an integrated fault detection logic~\cite{bonfe2006fault} that outputs Boolean variables $f_{i|j}$ indicating a sensor fault. LEC components also include an out-of-distribution (OOD) detector that generate Boolean flags $f_k$ when sensor data are OOD. A sensor selection component then combines these estimates to produce a final estimate $\hat{q}_k$, which can be used by controllers or contingency management systems. Prediction accuracy is assessed by comparing the estimate, $\hat{q}_{k}$, against the actual ground truth, $q$.

\begin{table*}[!ht]
    \centering    
    \caption{\fretish and generated PCTL* formulas for our industrial case study. Complete list available in~\cite{FRETGithubX}.} 
\label{table:autonomy_examples_filtered}
    \resizebox{\textwidth}{!}{%
    \begin{tabular}{p{1.6cm}L{4.2cm}L{5.2cm}L{2cm}p{7.8cm}}
    \toprule
         \textbf{ID} & \textbf{Natural-Language} &  \textbf{\fretish} & \textbf{Template Key} &
         \textbf{PCTL*}  \\\midrule
         
         \textbf{[P-001]} & \textit{The sensor selection component shall always satisfy accurate under ideal conditions.}  
          & \conditionF{whenever idealConditions} \component{SensorSelection} shall \timing{immediately} \responseF{q\_hat = q}  & [null, holding, null, immediately] & \texttt{P>=1[(G (idealConditions => (P>=1[(q\_hat = q)])))]}
         \\ \midrule
        \textbf{[P-006]} & \textit{In auto-takeoff mode, whenever runway incursion occurs, the sensor selection output shall with probability > 0.99 at the next time point satisfy detection of runway incursion.} 
        & \scope{in auto\_takeoff\_mode} \conditionF{whenever q\_k} \component{SensorSelection} shall \probability{with probability $> 0.99$} \timing{at the next timepoint} \responseF{incursionDetected} 
        &[in, holding, bound, next]& \texttt{P>=1[((G ((! (((! auto\_takeoff\_mode) \& (X auto\_takeoff\_mode)))) | (X ((auto\_takeoff\_mode \& (X (! auto\_takeoff\_mode))) R (q\_k => (P>0.99[((auto\_takeoff\_mode \& (X (! auto\_takeoff\_mode))) | ((X incursionDetected) \& (! (auto\_takeoff\_mode \& (X (! auto\_takeoff\_mode))))))])))))) \& (auto\_takeoff\_mode => ((auto\_takeoff\_mode \& (X (! auto\_takeoff\_mode))) | ((auto\_takeoff\_mode \& (X (! auto\_takeoff\_mode))) R (q\_k => (P>0.99[((auto\_takeoff\_mode \& (X (! auto\_takeoff\_mode))) | ((X incursionDetected) \& (! (auto\_takeoff\_mode \& (X (! auto\_takeoff\_mode))))))]))))))]} \\
\hline 


         \textbf{[P-007]} & \textit{after auto-land mode, the sensor selection output shall with probability > 0.99 eventually satisfy detection of correct runway exit} 
         & \scope{after auto\_land\_mode} \component{SensorSelection} shall \probability{with probability $> 0.99$} \timing{eventually} \responseF{detect\_correct\_exit} 
         & [after, null, bound, eventually] 
         & \texttt{P>=1[(((! (auto\_land\_mode \& (X (! auto\_land\_mode)))) U ((auto\_land\_mode \& (X (! auto\_land\_mode))) \& (X (P>0.99[(F detect\_correct\_exit)])))) | (G (! (auto\_land\_mode \& (X (! auto\_land\_mode))))))]} \\ \midrule

         \textbf{[P-012]} & \textit{Upon a runway incursion, the Runway Intrusion Detector, with probability greater than 0.99\%, detects the incursion before an unsafe separation distance is reached.} 
         & \conditionF{upon q\_k} \component{RunwayIntrusionDetector} shall \probability{with probability $> 0.9999$} \timing{before unsafe\_sep\_distance} \responseF{incursionDetected} 
         & [null, regular, bound, before] 
         & \texttt{P>=1[((G (((! q\_k) \& (X q\_k)) => (X (P>0.9999[(incursionDetected R (! unsafe\_sep\_distance))])))) \& (q\_k => (P>0.9999[( incursionDetected R (! unsafe\_sep\_distance))])))]} \\ \hline
         \textbf{[P-017]}& Whenever a runway incursion, RunwayIntrusionDetector, with probability greater than (99.99\%), detects an incursion before 10 time units. & \conditionF{whenever q\_k} \component{RunwayIntrusionDetector} shall \probability{with probability > 0.9999} \timing{within 10 ticks} \responseF{incursionDetected} & [null, holding, bound, within] & \texttt{P>=1[(G (q\_k => (P>0.9999[(F<=10 incursionDetected)])))]}\\\hline
         \textbf{[P-019]} & Upon accurate, the Runway Detector with probability greater than 99\% remains accurate for 10 time units. & \conditionF{upon accurate} \component{RunwayDetector} shall \probability{with probability > 0.99} \timing{for 10 ticks} \responseF{q\_hat =q} & [null, regular bound, for] & \texttt{P>=1[((G (((! accurate) \& (X accurate)) => (X (P>0.99[(G<=10 (q\_hat = q))])))) \& (accurate => (P>0.99[(G[<=10] (q\_hat = q))])))]}\\\bottomrule
    \end{tabular}}
    \vspace{-4mm}
\end{table*}

For this case study, we were able to express all of the 21 natural language requirements (\textsc{Set4}) that were provided by our industrial partner, RTRC. A subset is shown in \new{Table} \ref{table:autonomy_examples_filtered}. While the majority of these requirements were probabilistic, 2 of them were not (e.g.,  \textbf{[P-001]} in Table~\ref{table:autonomy_examples_filtered}). For these requirements, our \fret extension generated both LTL and PCTL$^*$, the latter was used for analysis by RTRC against their PRISM model.

Our results show that \fret can effectively support the specification of probabilistic requirements in this industrial case study. We validated that the formalizations matched the intended meaning of the natural language requirements 
through multiple iterations of discussions with our partner \new{(totalling approximately 30 hours over several months)}, who has expertise in formal logics.  \new{The feedback from our industrial partner was generally positive. }

\begin{quote}
\new{``\textit{Probabilistic FRET makes the task of formalizing requirements to PCTL* accessible to typical engineers without much formal methods background. It will also greatly accelerate the pace of formalizing requirements by engineers with formal methods backgrounds.}''}
\end{quote}

\noindent\new{They also outlined several avenues for future improvement.}

\begin{quote}
\new{``\textit{FRET like any language, requires a learning period before the user becomes very proficient at getting PCTL* that always matches the intention. The learning period however is short. There are certain parts of the language like conditions and modes which require a deeper grasp of the semantics that goes beyond the immediate natural intuition of a novice user. The simulator and the descriptions are very helpful in understanding some of those subtleties in the language.}''}
\end{quote}

\noindent\new{Exploring these usability aspects will form a centrepiece of our future projects with FRET. Specifically, we plan to extend classic FRET’s diagrammatic explanations to probabilistic requirements, making validation more intuitive and accessible to users with diverse backgrounds.}



\subsection{Practical Utility of Extended \fretish (RQ3)}
To answer \textbf{RQ3}, we analyzed the structure of the probabilistic \fretish requirements in \textsc{Set1--4}. \new{Fig. \ref{fig:alltemplates} illustrates the spread and variety of the template keys that were used across each of the requirement sets. We focus on the most commonly used template keys in Fig. \ref{fig:consolidatedTemplates}. As captured in Table~\ref{tab:templatekeys}, }69.8\% of all 334 requirements across the four sets follow just five distinct patterns, \new{as shown in Fig. \ref{fig:alltemplates} and Fig. \ref{fig:consolidatedTemplates}.}
Interestingly, none of these patterns make use of the scope field, although scopes were used in 14/334 (4.2\%) of the requirements. Of these half come from the RTRC case study (7/14 requirements) representing 33\% of \textsc{Set4}. While prior literature \cite{grunske2008specification} rarely includes scopes (other than global), other large-scale studies using classic \fretish have shown more frequent use of scopes \cite{farrell2024fretting,mavridou2020LMCPS}, as does our industrial example. We discussed this observation that scopes did not appear often in the literature with our industry partner who remarked \textit{``that is surprising since modes are common in control and autonomous system software''.} This aligns with our experience, i.e., modes appear frequently in critical systems making FRET's capability to express mode-specific requirements particularly valuable in practice.

Among our contributions was also a new condition value for \fretish (holding condition, \conditionF{whenever} construct). This construct appeared in 34 out of 334 requirements (10.2\%), highlighting its practical utility.  \new{It is interesting to note the spread of the requirements across the template keys that is illustrated in Fig. \ref{fig:alltemplates} and in Fig. \ref{fig:consolidatedTemplates}, where a significant portion, mostly from \textsc{Set4}, fall into the ``Other'' category. This demonstrates that although some template keys were used more often than others, a great variety of the template keys were used in these requirement sets, i.e., 43~template keys in total.}

\begin{table}[t]
    \caption{Commonly Used Template Keys.}
    \label{tab:templatekeys}
    \centering
    \def\tabcolsep{4pt}
    \small
    \begin{tabular}{p{0.365\textwidth}llllp{0.19\textwidth}}
    \toprule
         \textbf{\textsc{Template Key}} & \textbf{\textsc{Set1}} & \textbf{\textsc{Set2}} & \textbf{\textsc{Set3}} & \textbf{\textsc{Set4}} & \textbf{\textsc{Total}} \\ 
         \midrule
         $[$null, null, bound, within$]$ & 40/108 & 31/175& 13/30& 0/21 & 84/334 (25.1\%)\\ 
         $[$null, null, bound, eventually$]$ & 19/108  &40/175 &11/30 & 0/21 & 70/334 (21\%) \\ 
         $[$null, regular, bound, within$]$ & 18/108 & 20/175 & 0/30 & 1/21  & 39/334 (11.7\%) \\ 
         $[$null, null, bound, until$]$ & 8/108 & 12/175& 5/30 & 0/21& 25/334 (7.5\%) \\ 
         $[$null, holding, bound, within$]$ & 5/108 & 8/175& 0/30& 2/21& 15/334 (4.5\%)\\ \bottomrule
    \end{tabular}
    \vspace{-3mm}
\end{table}

\begin{figure}[t]
    \centering
    \includegraphics[width=\linewidth]{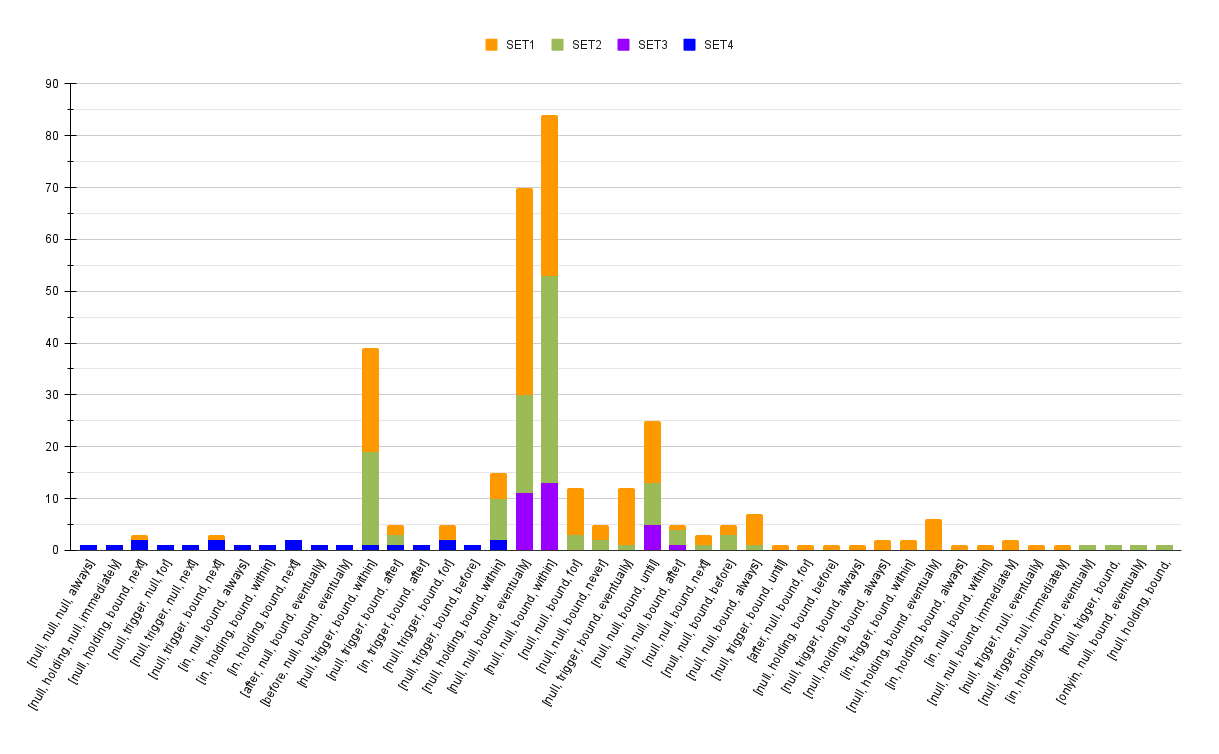}
    \caption{The total number of occurrences for each template key per requirement set.}
    \label{fig:alltemplates}
\end{figure}


\begin{figure}[t]
    \centering
    \includegraphics[width=\linewidth]{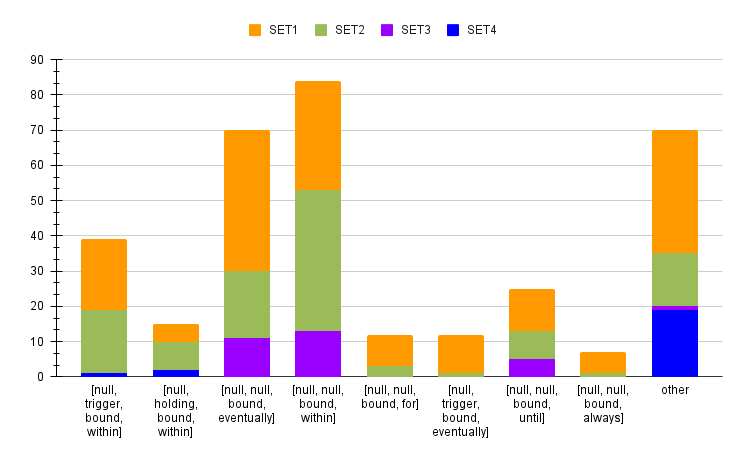}
    \caption{A consolidated view of the total number of occurrences for each template key per requirement set. Here, we moved template keys with $\leq 6$ occurrences to the ``Other'' category.}
    \label{fig:consolidatedTemplates}
\end{figure}

\section{Discussion}
\label{sec:discussion}

\subsection{\new{Challenges encountered}}

\new{
To extend \fret for probabilistic requirements, we had to carefully reason about design alternatives and address several non-trivial challenges.
\paragraph{Incorporating probabilistic semantics} We had to consider how to support probabilities. The addition of a new probability field felt natural but raised questions:  \textit{Where should this field reside within a requirement’s structure to allow nuanced probabilistic specifications while avoiding vacuous satisfaction?} Determining the precise placement required a detailed analysis of interactions with existing fields (see Subsection~\ref{sec:probabilityField}). 
\paragraph{Shifting from linear-time to branching-time reasoning} We had to reconsider \fret's reasoning approach. Classic \fret relies on Linear Temporal Logic (LTL), which interprets requirement satisfaction along a single execution trace. Probabilistic reasoning, however, naturally aligns with branching-time logic such as Probabilistic Computation Tree Logic (PCTL), which considers distributions over multiple possible futures. The challenge was compounded by the need to preserve the \fretish original design, and further amplified during integration with the pre-existing \fretish fields and in particular with the scope field. After extensive iteration, we addressed this challenge by introducing a two-step decomposition (see Algorithm~\ref{alg1}). While developing this decomposition, we uncovered subtle semantic nuances that could easily introduce errors. To ensure correctness, we developed a dedicated validation framework to verify that the resulting formal formulas faithfully captured the intended probabilistic semantics (see Section~\ref{sec:testing}).
\paragraph {Supporting probabilistic patterns observed in the research literature} We determined that many common probabilistic patterns described in literature could not be expressed using existing \fret constructs and the new probability field alone. Incorporating these without disrupting \fret’s language design was challenging, and required exploring multiple alternatives, including extending \fret fields and modifying template keys. Ultimately, we introduced a new \textit{condition type}, which proved essential for expressing a broad class of probabilistic patterns while preserving consistency with existing language constructs (see Section~\ref{sec:specification}).
\paragraph{Scale of template space} The number of FRETish template keys provides a quantitative measure of the language's expressive power, with each key representing a distinct, formally defined pattern. Consequently, a substantial increase in template keys corresponds to a meaningful expansion of \mbox{\fret's} capability to capture a broader range of requirements. Our extensions increase the total number of template keys supported by \fret from 160 to 560. This expanded set substantially increased the work required to extend \fret. Specifically, to finalize the formalization algorithm, we manually derived and reviewed 150+ individual formulas. Furthermore, the scale of the template space necessitated the development of an automated validation framework, while formalization issues uncovered during validation required analysis of more than 220 test results. 
\paragraph{Integration with the existing \fret toolchain}
Finally, it was essential to ensure that the probabilistic extension of \fret integrates seamlessly with the existing \fret features and preserves backward compatibility. This required that all core features, such as template selection, requirement validation, formal translation, and generation of formalizations, work correctly with probabilistic specifications.
}

\new{
\subsection{Structured Natural Language as a Mediator Between LLMs and Formal Specifications}
Large Language Models (LLMs) are beginning to be integrated into requirements engineering due to their advanced natural language processing capabilities, which offers immediate gains in efficiency. However, the inherent uncertain nature of LLMs and their tendency to hallucinate create a distinct tension with the field's need for precision. This is where tools such as \fret remain indispensable, particularly in safety-critical domains where reliance on LLMs is discouraged or strictly regulated. Even in less critical contexts, LLMs demonstrate higher effectiveness when used in conjunction with structured intermediate representations such as \fretish.
}

\new{
In one of our projects, we observed that translating ambiguous natural language requirements into \fretish before converting them into formal specifications significantly improved output accuracy compared to direct, one-step translation into temporal logic such as LTL. Our findings align with the work in~\cite{macedo2024intertrans}, which reports that employing structured intermediate representations can increase translation correctness by 18–43\% compared to direct translation approaches. 
Furthermore, using a structured language like \fretish not only enhances translation accuracy but also allows practitioners to leverage \fret's built-in explainability and formal analysis capabilities~\cite{giannakopoulou2020formal}. By operating at the \fretish level, engineers can obtain interpretable feedback on requirement consistency, detect ambiguities early, and systematically manage complex specification constraints. 
Taken together, these observations underscore the relevance and timeliness of our formally-defined structured natural language for probabilistic requirements. 
}
\section{Threats to validity}
\label{sec:threats}
\noindent\textit{Internal Threats.}  
Even though our validation approach is not exhaustive, we mitigate this threat by generating over 7000 models with diverse valuations of the \fretish fields and by providing a formal proof of correctness for Algorithm~\ref{sec:algorithm}.\medskip

\noindent\textit{External Threats}. We assembled four distinct requirement sets to evaluate our research questions, with each set chosen to offer complementary perspectives and coverage. While carefully constructed, certain limitations may influence the generalizability of our findings. \textsc{Set1} was derived from a seminal 2008 publication \cite{grunske2008specification}. Although still highly influential, it may not capture more recent developments.  \textsc{Set2} extends this by incorporating requirements from all papers citing~\cite{grunske2008specification}, reflecting the evolution of ideas within an active research community. A broader literature review could further diversify the set. \textsc{Set3} was sourced from an established benchmark, a standard practice in software engineering, though it may omit rare or emerging patterns. \textsc{Set4} was developed in collaboration with an industrial partner, ensuring practical relevance. Future work could benefit from engaging additional partners from different domains to further enhance coverage.

We collected a total of 334 probabilistic requirements. While we have made efforts to ensure that this corpus is both diverse and representative, it may not yet support broad generalization. Nonetheless, it stands as a concrete contribution of our work and establishes a basis for a growing benchmark of probabilistic requirements for future research and evaluation.

\section{Conclusion}
\label{sec:conclude}
As critical systems grow in complexity, incorporating uncertain sensor data and unpredictable machine learning components, it is becoming clear that software requirements must evolve to capture uncertainty. This is especially necessary in domains where rigorous formal development, verification and validation processes must be followed. 

To this end, we introduce a novel automated framework for formalizing probabilistic requirements from structured natural language, implemented as an extension to \fret. Our contributions include a compositional algorithm supported by a dedicated validation framework and a correctness proof, along with a comprehensive evaluation using both literature-based and industrial requirements from diverse domains. Notably, our tool was expressive enough for the majority of requirements; the few exceptions stemmed from limitations of classic \fret or involved certain forms of probabilistic nesting. These findings validate the utility of our approach and chart a path toward supporting other logics such as RPCTL.


\subsection*{Acknowledgements}
A. Mavridou is supported by NASA Contract No. 80ARC020D001. M. Farrell is supported by a Royal Academy of Engineering Research Fellowship. G. Vázquez performed part of this work during her internship with KBR Inc. at NASA Ames Research Center. T. Pressburger is supported by NASA’s System-Wide Safety project in the Airspace Operations and Safety Program. The NASA University Leadership initiative (grant \#80NSSC20M0163) provided funds to assist T.E. Wang with this research.
R.~Calinescu is supported by the UK Advanced Research and Invention Agency's Safeguarded AI programme, and the York Centre for Assuring Autonomy. M. Fisher is supported by the Centre for Robotic Autonomy in Demanding and Long Lasting Environments (CRADLE) under EPSRC grant EP/X02489X/1 and by the Royal Academy of Engineering under the Chairs in Emerging Technology scheme. 

\bibliographystyle{elsarticle-num}
\bibliography{refs} 

\end{document}